\tikzset{
    >=stealth',
    punkt/.style={
           text width=13em,
           minimum height=2em,
           text centered},
    pil/.style={
           ->,
           thick,
           shorten <=2pt,
           shorten >=2pt,}
}
\newcounter{ruleBoxNo}
\definecolor{shadecolor}{gray}{.95}
\definecolor{framecolor}{gray}{.65}
  \newcommand{\EXPTIME}{\textnormal{\textnormal{\textsc{Exptime}}}}
  \newcommand{\TwoEXPTIME}{\textnormal{\textnormal{\textsc{2Exptime}}}}
          \newcommand{\T}{\ensuremath{\mathcal{T}}}
	\newcommand{\AF}[1]{{#1}}
	\newtheorem{Theorem}{Theorem}
    \newtheorem{Lemma}[Theorem]{Lemma}
    \newtheorem{Corollary}[Theorem]{Corollary}
     \DeclareMathOperator{\MSO}{\textnormal{MSO}}
     \DeclareMathOperator{\mDatalog}{\textnormal{mDatalog}}
          \DeclareMathOperator{\Root}{\textnormal{\textbf{root}}}
          \DeclareMathOperator{\Label}{\textnormal{\textbf{label}}}
           \DeclareMathOperator{\Child}{\textnormal{\textbf{child}}}
          \DeclareMathOperator{\Desc}{\textnormal{\textbf{desc}}}
          \DeclareMathOperator{\Fc}{\textnormal{\textbf{fc}}}
          \DeclareMathOperator{\Ns}{\textnormal{\textbf{ns}}}
          \DeclareMathOperator{\Ls}{\textnormal{\textbf{ls}}}
          \DeclareMathOperator{\Leaf}{\textnormal{\textbf{leaf}}}
          \DeclareMathOperator{\Reject}{\textnormal{\textbf{reject}}}
          \DeclareMathOperator{\idb}{\textnormal{idb}}
          \DeclareMathOperator{\edb}{\textnormal{edb}}
        \DeclareMathOperator{\state}{\textnormal{state}}
          \newcommand{\Yes}{\ensuremath{\textnormal{\textbf{yes}}}} 
          \newcommand{\No}{\ensuremath{\textnormal{\textbf{no}}}} 
         \newenvironment{proofof}[1]{ \bigskip \noindent\textbf{Proof
             of #1:}\\ }{\qed}
         \newcommand{\nc}{\newcommand}
          \nc{\rnc}{\renewcommand}
          \nc{\emptyquery}{\varnothing}
          \rnc{\geq}{\ensuremath{\geqslant}}
          \rnc{\leq}{\ensuremath{\leqslant}}
          \nc{\deff}{:=}
          \nc{\set}[1]{\ensuremath{\{ #1 \}}}
          \nc{\setc}[2]{\set{#1 \,:\, #2}}
         \nc{\NN}{\ensuremath{\mathbb{N}}}
          \nc{\NNpos}{\ensuremath{\NN_{\mbox{\tiny $\scriptscriptstyle \geq 1$}}}}
          \nc{\ar}{\ensuremath{\textit{ar}}} 
         \nc{\A}{\ensuremath{\mathcal{A}}}
          \nc{\PP}{\ensuremath{\mathcal{P}}}
          \rnc{\S}{\ensuremath{\mathcal{S}}}
          \nc{\tauGK}{\ensuremath{\tau_{\textit{GK}}}}
          \nc{\tauGKAlph}[1]{\ensuremath{\tau_{\textit{GK},#1}}}
          \nc{\tauGKSigma}{\tauGKAlph{\Sigma}}
          \nc{\tauo}{\ensuremath{\tau_o}}
          \nc{\tauu}{\ensuremath{\tau_u}}
          \nc{\tauuAlph}[1]{\ensuremath{\tau_{u,#1}}}
          \nc{\tauuSigma}{\tauuAlph{\Sigma}}
          \nc{\tauuSigmaStrich}{\tauuAlph{\Sigma'}}
          \nc{\atoms}{\ensuremath{\textit{atoms}}}
          \nc{\emptyword}{\varepsilon}
	\nc{\size}[1]{\ensuremath{|\!|#1|\!|}}
	\nc{\Bool}[1]{\ensuremath{#1_{\textit{Bool}}}}
	\nc{\Aut}[1]{\ensuremath{\texttt{\upshape #1}}} 
	\nc{\myparagraph}[1]{ \smallskip \textbf{#1.} }
\newcommand{\pushright}[1]{\ifmeasuring@#1\else\omit\hfill$\displaystyle#1$\fi\ignorespaces}
      \DeclareMathOperator{\CQ}{\text{CQ}}
      \DeclareMathOperator{\strL}{\textnormal{strL}} 
\nc{\CT}{\textnormal{CT}}
\nc{\structure}{\textit{structure}}
\nc{\CTA}{\ensuremath{\CT_{\forall}}}
\nc{\CTE}{\ensuremath{\CT_{\exists}}}
\nc{\digit}{\ensuremath{\textit{digit}}}
\nc{\Ccount}{\ensuremath{\textit{count}}}
\nc{\Left}{\textnormal{left}}
\nc{\Right}{\textnormal{right}}
\nc{\CTEleft}{\CTE^{\textit{left}}}
\nc{\CTEright}{\CTE^{\textit{right}}}
\nc{\LeafCT}{\ensuremath{\textnormal{Leaf-}\CT}}
\nc{\LeafCTA}{\ensuremath{\textnormal{Leaf-}\CTA}}
\nc{\LeafCTEleft}{\textnormal{Leaf-}\CTEleft}
\nc{\LeafCTEright}{\textnormal{Leaf-}\CTEright}
\nc{\CCells}{\textnormal{CCells}}
\nc{\BCells}{\textnormal{BCells}}
\nc{\PCells}{\textnormal{PCells}}
\nc{\Non}{\textnormal{Non}}
\nc{\Blank}{\textbf{\textvisiblespace}}
\nc{\startCT}{\ensuremath{\textnormal{Start-}\CT}}
 \DeclareMathOperator{\enc}{\textnormal{enc}}
 \nc{\Succ}{\textnormal{Succ}}
  \nc{\SameLevel}{\textit{SameLevel}}
   \nc{\SameCell}{\textit{SameCell}}
  \nc{\EquiLevel}{\textit{EquiLevel}}
   \nc{\EquiCell}{\textit{EquiCell}}
      \nc{\Acc}{\ensuremath{\textit{Acc}}}
                 \nc{\Ans}{\ensuremath{\textit{Ans}}}
 \nc{\type}{\textit{type}}
\let\doendproof\endproof \renewcommand\endproof{~\hfill$\qed$\doendproof}
\begin{document}

\mainmatter 
\title{Monadic Datalog Containment on Trees Using the Descendant-Axis%
  \protect\footnote{This article is the full version of  \cite{FrochauxS16}.}%
}
\titlerunning{%
  Monadic Datalog Containment on Trees Using the Descendant-Axis%
}%
\author{Andr\'e Frochaux \and Nicole Schweikardt}
\institute{
Institut für Informatik, Humboldt-Universität zu Berlin\\ \mailAndreNicole
}

\maketitle

\begin{abstract} In their AMW'14-paper,
Frochaux, Grohe, and Schweikardt showed that the query containment problem for monadic datalog on
finite unranked labeled trees is $\EXPTIME$-complete when
(a) considering unordered trees using the \emph{child}-axis,
and when (b) considering ordered trees using the axes
\emph{firstchild}, \emph{nextsibling}, and \emph{child}.
Furthermore, when allowing to use also the \emph{descendant}-axis, the query containment problem was shown to be solvable in 2-fold exponential time,
but it remained open to determine the problem's exact complexity in presence of the descendant-axis.
The present paper closes this gap by showing that, in the presence of the descendant-axis, the problem is $\TwoEXPTIME$-hard. 
\end{abstract}

\setlength\abovedisplayshortskip{0pt}
\setlength\belowdisplayshortskip{0pt}	
\setlength\abovedisplayskip{4pt plus 2pt minus 2pt}    
\setlength\belowdisplayskip{4pt plus 2pt minus 2pt}

\section{Introduction}\label{section:Introduction}

The query containment problem (QCP) is a fundamental problem that has been
studied for various query languages. Datalog is a standard tool for
expressing queries with recursion. From Cosmadakis et al.\ \cite{CGKV}
and Benedikt et al.\ \cite{DBLP:conf/icalp/BenediktBS12} it is known
that the QCP for \emph{monadic} datalog queries on the class of
all finite relational structures is $\TwoEXPTIME$-complete.

Restricting attention to finite unranked labeled trees,
Gottlob and Koch \cite{GottlobKoch}
showed that on \emph{ordered} trees the QCP for monadic datalog is 
$\EXPTIME$-hard and decidable, leaving open the question for a tight bound.
This gap was closed by Frochaux, Grohe, and Schweikardt
in \cite{FroGroSchw14} by giving a matching $\EXPTIME$ upper bound for
the QCP for monadic datalog on ordered trees using the axes
\emph{firstchild}, \emph{nextsibling}, and \emph{child}.
Similar results were obtained in \cite{FroGroSchw14} also for 
\emph{unordered} finite labeled trees: in this setting, the QCP is
$\EXPTIME$-complete for monadic datalog queries on unordered trees
using the \emph{child}-axis.

For the case where queries are allowed to also use the 
\emph{descendant}-axis, \cite{FroGroSchw14} presented a 2-fold
exponential time algorithm for the QCP for monadic datalog on (ordered
or unordered) trees. Determining the problem's exact complexity in the
presence of the \emph{descendant}-axis, however, was left open.

The present paper closes the gap by proving a matching $\TwoEXPTIME$
lower bound (both, for ordered and for unordered trees).
This gives a conclusive answer to a question posed by Abiteboul et
al.\ in \cite{AbiteboulBMW13}, asking for the complexity of the QCP
on unordered trees in the presence of the  descendant-axis.
Our $\TwoEXPTIME$-hardness proof for \emph{ordered} trees is by a reduction
from a $\TwoEXPTIME$-hardness result of \cite{BjorklundMS08} for the
validity of conjunctive queries w.r.t.\ schema constraints.
For obtaining the $\TwoEXPTIME$-hardness on \emph{unordered} trees, we
follow the approach of \cite{BjorklundMS08} and
construct a reduction from the $\TwoEXPTIME$-complete word problem for
exponential-space bounded alternating Turing machines \cite{DBLP:journals/jacm/ChandraKS81}.

The remainder of the paper is organised as follows. 
Section~\ref{section:Preliminaries} fixes the basic notation.
Section~\ref{section:ReductionNTA} presents a $\TwoEXPTIME$ lower
bound for the QCP  on ordered trees using the axes
\emph{firstchild}, \emph{nextsibling}, \emph{root}, \emph{leaf},
\emph{lastsibling}, \emph{child}, \emph{descendant}.
Section~\ref{section:Emptiness} is devoted to the $\TwoEXPTIME$ lower bound
for the QCP on unordered trees using only the axes \emph{child} and \emph{descendant}.
We conclude in Section~\ref{section:relatedResults}. 

  Proof details can be found in the appendix.

\section{Trees and Monadic Datalog ($\mDatalog$)}\label{section:Preliminaries}\label{section:Datalog}

Throughout this paper, $\Sigma$ will
always denote a finite non-empty alphabet.
\\
By $\NN$ we denote the set of non-negative integers, and we let $\NNpos\deff\NN\setminus\set{0}$.

\myparagraph{Relational Structures}
As usual, a \emph{schema} $\tau$ consists of a
finite number of relation symbols $R$, each of a fixed
\emph{arity} $\ar(R)\in\NNpos$.
A \emph{$\tau$-structure} $\A$ consists of a \emph{finite} non-empty
set $A$ called the \emph{domain}
of $\A$, and a relation
$R^{\A}\subseteq A^{\ar(R)}$ for each relation symbol $R\in\tau$. 
It will often be convenient to
identify $\A$ with the \emph{set of atomic facts of $\A$},
i.e., the set $\atoms(\A)$ consisting of all facts $R(a_1,\ldots,a_{\ar(R)})$
for all relation symbols $R\in\tau$ and all tuples $(a_1,\ldots,a_{\ar(R)})\in R^\A$.

If $\tau$ is a schema and $\ell$ is a list of relation symbols, we
write $\tau^\ell$ to denote the extension of the schema $\tau$ by the
symbols in $\ell$.
Furthermore, $\tau_\Sigma$ denotes the extension of $\tau$ by
new unary relation symbols $\Label_\alpha$, for all $\alpha\in\Sigma$.

\myparagraph{Unordered Trees}
An \emph{unordered $\Sigma$-labeled tree} $T=(V^T,\lambda^T,E^T)$
consists of a finite non-empty set $V^T$ of nodes, a function  
$\lambda^T: V^T\to \Sigma$ assigning to each node $v$ of $T$ a label
$\lambda(v)\in\Sigma$, and a set $E^T\subseteq V^T\times V^T$ of directed edges such
that the directed graph $(V^T,E^T)$ is a rooted tree where edges are directed
from the root towards the leaves.
We represent such a tree $T$ as a relational structure of domain $V^T$
with unary and binary relations: For each label $\alpha\in
\Sigma$, $\Label_\alpha(x)$ expresses that $x$ is a node with label
$\alpha$;
$\Child(x,y)$ expresses that $y$ is a child of node $x$; 
$\Root(x)$ expresses that $x$ is the tree's root node;
$\Leaf(x)$ expresses that $x$ is a leaf; and $\Desc(x,y)$ expresses
that $y$ is a descendant of $x$ (i.e., $y$ is a child or a grandchild
or \ldots\ of $x$).
We denote this relational structure
representing $T$ by $\S_u(T)$, but when no confusion
arises we simply write $T$ instead of $\S_u(T)$.

The queries we consider for unordered trees are allowed to make use of at least the
predicates $\Label_\alpha$ and $\Child$.
We fix the schema
\ \(
  \tauu \ \deff \ \set{\,\Child\,}.
\) \

\myparagraph{Ordered Trees}
An \emph{ordered} $\Sigma$-labeled tree $T=(V^T,\lambda^T,E^T,\textit{order}^T)$
has the same components as an unordered $\Sigma$-labeled tree and,
in addition, $\textit{order}^T$ fixes for each node $u$ of $T$
a strict linear order of all the children of $u$ in $T$.

To represent such a tree as a relational structure, we use the same
domain and the same 
predicates as for unordered $\Sigma$-labeled trees, along with three further
predicates $\Fc$ (``first-child''), $\Ns$ (``next-sibling''), and
$\Ls$ (``last sibling''), where $\Fc(x,y)$ expresses that $y$
is the first child of node $x$ (w.r.t.\ the linear order of the
children of $x$ induced by $\textit{order}^T$); $\Ns(x,y)$
expresses that $y$ is the right sibling of $x$ (i.e., $x$ and $y$ have
the same parent $p$, and $y$ is the immediate successor of $x$ in the
linear order of $p$'s children given by $\textit{order}^T$); and 
$\Ls(x)$ expresses that $x$ is the rightmost sibling (w.r.t.\ the
linear order of the children of $x$'s parent given by $\textit{order}^T$).
We denote this relational structure representing $T$ by $\S_o(T)$,
but when no confusion arises we simply write $T$ instead of $\S_o(T)$.

The queries we consider for ordered trees are allowed to make use of
at least the predicates $\Label_\alpha$, $\Fc$, and $\Ns$.
We fix the schemas
\ $\tauo \deff  \set{\,\Fc, \, \Ns\,}$ \ and 
\ $\tauGK\ \deff \ \tauo^{\Root,\Leaf,\Ls}$. \
In \cite{GottlobKoch}, Gottlob and Koch used $\tauGKSigma$-structures
to represent
ordered $\Sigma$-labeled trees.

\myparagraph{Datalog}
We assume that the reader is familiar with the syntax and semantics of
\emph{datalog} (cf., e.g., \cite{Datalog-Bible,GottlobKoch}).
Predicates that occur in the head of
some rule of 
a datalog program
$\PP$ are called \emph{intensional}, whereas predicates that only
occur in the body of rules of $\PP$ are called \emph{extensional}.
By $\idb(\PP)$ and
$\edb(\PP)$ we denote the sets of intensional and extensional predicates
of $\PP$, resp. We say that $\PP$ \emph{is of schema $\tau$}
if $\edb(\PP)\subseteq \tau$.
We write $\T_{\PP}$ to denote the \emph{immediate
  consequence operator} associated with a datalog program $\PP$.
Recall that $\T_{\PP}$ maps a set $C$ of atomic facts to the set of all
atomic facts that are derivable from $C$ by at most one
application of the rules of $\PP$.
The monotonicity of $\T_{\PP}$
implies that for each finite set $C$, the iterated application of
$\T_{\PP}$ to $C$ leads to a fixed point, denoted by
$\T_{\PP}^\omega(C)$, which is reached after a finite number of
iterations.

\myparagraph{Monadic datalog queries}
A datalog program belongs to \emph{monadic datalog}
($\mDatalog$, for short), 
if all its \emph{intensional} predicates have arity~1. 

A \emph{unary monadic datalog query} of schema $\tau$ is a tuple $Q=(\PP,P)$
where $\PP$ is a monadic datalog program of schema $\tau$ and $P$ is an
intensional predicate of $\PP$.
$\PP$ and $P$ are called the \emph{program} and the \emph{query
  predicate} of $Q$.
When evaluated in a finite $\tau$-structure $\A$ that represents a
labeled tree
$T$, the query $Q$ results in the unary relation
\,$\AF{Q}(T) \, \deff \,
  \setc{ a\in A\,}{\,  P(a) \,\in\,
    \T_\PP^\omega(\atoms(\A))\, }$.

The \emph{Boolean monadic datalog query} $\Bool{Q}$ specified by
$Q=(\PP,P)$ is the Boolean query with $\AF{\Bool{Q}}(T)=\Yes$ iff the
tree's root node belongs to $\AF{Q}(T)$.

The \emph{size} $\size{Q}$ of a monadic datalog query $Q$ is the
length of $Q=(\PP,P)$ viewed as a string over a suitable alphabet.

\myparagraph{Expressive power of monadic datalog on trees} 
On \emph{ordered}
$\Sigma$-labeled trees represented as $\tauGKSigma$-structures, 
monadic datalog can express exactly the same unary queries as monadic
second-order logic \cite{GottlobKoch} --- for short, we will say ``$\mDatalog(\tauGK)=\MSO(\tauGK)$ on
ordered trees''. Since the $\Child$ and $\Desc$ relations are
definable in $\MSO(\tauGK)$, \
$\mDatalog(\tauGK)=\mDatalog(\tauGK^{\Child,\Desc})$ on ordered trees.
Moreover, for (ordered or unordered) trees, every monadic Datalog query that uses the $\Desc$-axis
can be rewritten in 1-fold exponential time into an equivalent monadic
datalog query
which uses the $\Child$-axis, but not the $\Desc$-axis (see the proof
of Lemma~23 in the full version of \cite{FroGroSchw14}).

Using the monotonicity of the immediate
consequence operator, 
one obtains
that removing any of the predicates $\Root,
\Leaf,\Ls$ from $\tauGK$ strictly decreases the expressive power of
$\mDatalog$ on ordered trees (see \cite{DBLP:journals/corr/FroSchw13}). 
By a similar reasoning
one also obtains that on \emph{unordered} trees,
represented as $\tauuSigma^{\Root,\Leaf,\Desc}$-structures, monadic
datalog is strictly less expressive than monadic
second-order logic, and omitting any of the predicates $\Root$, $\Leaf$ 
further reduces the expressiveness of
monadic datalog on unordered trees \cite{DBLP:journals/corr/FroSchw13}.

\myparagraph{The Query Containment Problem (QCP)} Let $\tau_\Sigma$ be one of the schemas used for representing (ordered or unordered) 
$\Sigma$-labeled trees as relational structures. For two unary queries $Q_1$ and $Q_2$ of schema $\tau_\Sigma$
we write $Q_1 \subseteq Q_2$ 
to indicate that for every $\Sigma$-labeled tree $T$ we have $\AF{Q_1}(T) \subseteq \AF{Q_2}(T)$.
Similarly, if $Q_1$ and $Q_2$ are \emph{Boolean} queries of schema $\tau_\Sigma$, we write
$Q_1\subseteq Q_2$ to indicate that for every $\Sigma$-labeled tree
$T$, if $\AF{Q_1}(T)=\Yes$ then also $\AF{Q_2}(T)=\Yes$.
We write $Q_1\not\subseteq Q_2$ to indicate that
$Q_1\subseteq Q_2$ does not hold. 
For a schema $\tau$, the \emph{query containment problem (QCP) for $\mDatalog(\tau)$ on finite labeled trees} receives as input a
finite alphabet $\Sigma$ and two (unary or Boolean)
$\mDatalog(\tau_\Sigma)$-queries $Q_1$ and $Q_2$, and the task is to
decide whether $Q_1\subseteq Q_2$.
From \cite{FroGroSchw14} we know:
\begin{Theorem}[Frochaux et al.\ \cite{FroGroSchw14}]\label{thm:AMW14}
 The QCP for $\mDatalog(\tau_u^{\Root,\Leaf,\Desc})$ on unordered trees and
 for $\mDatalog(\tauGK^{\Child,\Desc})$ on ordered trees can be solved in 2-fold exponential time.
\end{Theorem}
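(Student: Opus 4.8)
The plan is to reduce the QCP \emph{with} the descendant-axis to the QCP \emph{without} it, and then to bound the running time of the composed procedure. Both ingredients are already at hand: the rewriting of $\Desc$-atoms into $\Child$-based monadic datalog recalled just before the theorem, and the $\EXPTIME$ upper bounds for the $\Desc$-free QCP established in \cite{FroGroSchw14} (for $\mDatalog(\tauGK^{\Child})$ on ordered trees and for $\mDatalog(\tau_u^{\Root,\Leaf})$ on unordered trees). The whole argument is essentially a complexity-aware composition of these two facts.

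First I would take the input queries $Q_1,Q_2$ of schema $\tauGK^{\Child,\Desc}$ (ordered case) resp.\ $\tau_u^{\Root,\Leaf,\Desc}$ (unordered case) and apply the recalled rewriting to each, obtaining equivalent queries $Q_1',Q_2'$ that no longer mention $\Desc$ and hence are of schema $\tauGK^{\Child}$ resp.\ $\tau_u^{\Root,\Leaf}$. By the recalled result this rewriting runs in time $2^{O(n)}$, where $n\deff\size{Q_1}+\size{Q_2}$, so in particular $\size{Q_1'},\size{Q_2'}\leq 2^{O(n)}$. Since $Q_i$ and $Q_i'$ return the same relation on every $\Sigma$-labeled tree $T$, we have $\AF{Q_1}(T)\subseteq\AF{Q_2}(T)$ iff $\AF{Q_1'}(T)\subseteq\AF{Q_2'}(T)$ for every $T$, and therefore $Q_1\subseteq Q_2$ iff $Q_1'\subseteq Q_2'$; the reduction is thus correct. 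The Boolean variant is handled identically, since the root-membership test defining $\Bool{Q}$ is preserved by the equivalence.

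Next I would run the known $\EXPTIME$ algorithm for the $\Desc$-free QCP on the pair $(Q_1',Q_2')$. That algorithm decides $Q_1'\subseteq Q_2'$ in time $2^{O(\size{Q_1'}+\size{Q_2'})}$. Substituting the bound $\size{Q_1'}+\size{Q_2'}\leq 2^{O(n)}$ yields an overall running time of $2^{O(2^{O(n)})}$, i.e.\ 2-fold exponential in $n$; the $2^{O(n)}$ spent on the rewriting in the first step is dominated by this, so the composed procedure runs in 2-fold exponential time, as claimed.

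The only genuinely delicate point is the complexity bookkeeping that keeps the two exponentials from stacking into something worse than 2-fold exponential: one must verify that the $\EXPTIME$ procedure is \emph{single}-exponential in the size of its input, so that feeding it an exponential-size query produces exactly a 2-fold exponential bound, and that the query the rewriting outputs indeed lies in a schema to which that procedure applies. Both are inherited directly from \cite{FroGroSchw14}, as the rewriting lands in precisely the $\Child$-based schemas ($\tauGK^{\Child}$ resp.\ $\tau_u^{\Root,\Leaf}$) for which the $\EXPTIME$ bound was proved, so beyond this composition no further work is needed.
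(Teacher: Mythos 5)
Your proposal is correct and follows exactly the route the paper indicates: Theorem~\ref{thm:AMW14} is cited from \cite{FroGroSchw14} rather than proved here, and the underlying argument is precisely your composition of the 1-fold exponential $\Desc$-elimination (the rewriting recalled in Section~\ref{section:Preliminaries}, Lemma~23 of the full version of \cite{FroGroSchw14}) with the $\EXPTIME$ decision procedure for the $\Desc$-free schemas $\tauGK^{\Child}$ and $\tau_u^{\Root,\Leaf}$. Your complexity bookkeeping is also sound: even with the $\EXPTIME$ algorithm running in time $2^{\mathrm{poly}(\cdot)}$ rather than $2^{O(\cdot)}$, feeding it queries of size $2^{O(n)}$ still yields a $2^{2^{O(n)}}$ bound, i.e.\ 2-fold exponential time.
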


\section{$\TwoEXPTIME$-hardness on Ordered Trees}
\label{section:ReductionNTA}

\newcounter{CounterThm:2EXPhard_ordered_unranked}
\setcounter{CounterThm:2EXPhard_ordered_unranked}{\value{theorem}}

\begin{Theorem}\label{Thm:2EXPhard_ordered_unranked}
  The QCP for Boolean $\mDatalog(\tauGK^{\Child,\Desc})$ on
  finite labeled ordered trees is $\TwoEXPTIME$-hard.
\end{Theorem}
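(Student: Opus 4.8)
The plan is to reduce from the validity problem for conjunctive queries with respect to schema constraints, shown $\TwoEXPTIME$-hard in \cite{BjorklundMS08}. An instance of that problem consists of a schema $S$ (say, a DTD, or equivalently a bottom-up tree automaton) and a Boolean conjunctive query $q$ over the \emph{child}- and \emph{descendant}-relations, and the question is whether every ordered $\Sigma$-labeled tree conforming to $S$ satisfies $q$; denote this by $S\models q$. From such an instance I would build, in polynomial time, two Boolean $\mDatalog(\tauGK^{\Child,\Desc})$-queries $Q_1$ and $Q_2$ such that $\AF{Q_1}(T)=\Yes$ iff $T$ conforms to $S$, and $\AF{Q_2}(T)=\Yes$ iff $T\models q$. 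For such queries $Q_1\subseteq Q_2$ holds iff for every tree $T$ conforming to $S$ we have $T\models q$, i.e.\ iff $S\models q$; hence deciding the QCP would solve the source problem, yielding $\TwoEXPTIME$-hardness. Since the $\Desc$-atoms of $q$ are encoded directly by the $\Desc$-predicate, the reduction makes essential use of the descendant-axis, as the theorem demands.

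The query $Q_2$ is the straightforward one. Writing $q=\exists x_1\cdots\exists x_n\,\bigwedge_j\alpha_j$, the whole conjunction becomes the body of a single datalog rule: monadicity restricts only the arity of intensional \emph{head} predicates, not the number of variables occurring in a rule body. Fixing one variable, say $x_1$, as the head variable, the rule $\textit{Match}(x_1)\leftarrow\bigwedge_j\alpha_j$ derives $\textit{Match}(v)$ exactly at those nodes $v$ that are the image of $x_1$ under some homomorphism from $q$ into $T$, with the descendant-atoms of $q$ discharged directly by $\Desc$. The two rules $\textit{Ans}(x)\leftarrow\textit{Match}(x)$ and $\textit{Ans}(x)\leftarrow\Child(x,y),\,\textit{Ans}(y)$ then lift a match upward to the root, and taking $\textit{Ans}$ as the query predicate gives $\AF{Q_2}(T)=\Yes$ iff $q$ has a satisfying assignment in $T$, i.e.\ iff $T\models q$.

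Building $Q_1$ is the delicate step, and I expect it to be the main obstacle. We need $Q_1$ to accept \emph{exactly} the trees conforming to $S$, but conformance is a statement universally quantified over all nodes (``each node's word of child-labels matches the regular expression prescribed by $S$''), whereas monadic datalog, having no negation, is monotone and naturally expresses only existential ``there is a node where \ldots'' properties. To render a universal condition positively, I would first turn $S$ into a nondeterministic bottom-up unranked tree automaton $\A_S$ recognising exactly the conforming trees (a polynomial step), and then have the program compute, through unary intensional predicates $\state_s$, the set of automaton states reachable at each node. The transition of $\A_S$ at a node requires the word formed by its children's states to lie in a regular language; this is checked by threading an NFA along the $\Fc$/$\Ns$/$\Ls$ sibling chain using auxiliary unary predicates. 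Because nondeterministic bottom-up acceptance coincides with the existence of an accepting run, declaring the presence of an accepting state at the $\Root$ to be the acceptance condition makes $Q_1$ recognise precisely the language of $\A_S$, namely the conforming trees. That $\mDatalog(\tauGK)=\MSO(\tauGK)$ on ordered trees \cite{GottlobKoch} guarantees expressibility in principle; the point that needs care is to carry out this bottom-up simulation so that the program stays \emph{polynomial} --- in particular using the nondeterministic $\A_S$ and avoiding any determinisation --- while still recognising exactly, and not merely a superset of, the conforming trees.
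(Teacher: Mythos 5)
Your proposal follows essentially the same route as the paper: a polynomial-time reduction from Björklund--Martens--Schwentick's validity problem for Boolean $\CQ(\Child,\Desc)$ w.r.t.\ a tree automaton, encoding the conjunctive query as a single datalog rule whose match is propagated to the root via the $\Child$-axis, and encoding the automaton by a polynomial-size $\mDatalog(\tauGK^{\Child})$ program that computes at each node the set of all states assumable there, threading the NFAs for the horizontal languages along the $\Fc/\Ns/\Ls$ sibling chain --- exactly the paper's Lemma~\ref{Lem:NTA2MDATALOG}. Your key observation that datalog's fixed-point semantics yields all reachable states at once (so no determinisation is needed and the simulation stays polynomial) is precisely how the paper resolves the same concern, so the argument is sound as proposed.
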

The proof is by a reduction based on a $\TwoEXPTIME$-hardness result of Björklund, Martens, and Schwentick \cite{BjorklundMS08}.
For stating their result, we recall some notation used in \cite{BjorklundMS08}. 
A \emph{nondeterministic (unranked) tree automaton} (NTA) $\Aut{A}=(\Sigma,S,\Delta,F)$ consists of an input alphabet $\Sigma$, 
a finite set $S$ of states, a set $F \subseteq S$ of accepting states, and a finite set $\Delta$ of transition rules of the form 
$(s,\alpha) \rightarrow L$, where $s\in S$, $\alpha \in \Sigma$, and
$L$ is a regular string-language over $S$. 
A \emph{run} of the NTA $\Aut{A}$ on a ordered $\Sigma$-labeled tree $T$ is a mapping $\rho: V^T \rightarrow S$ such that the following 
is true for all nodes $v$ of $T$, where $\alpha$ denotes the label of $v$ in $T$:
 if $v$ has $n\geq 0$ children $u_1,\ldots,u_n$ (in order from the left to the right), then there exists a rule 
     $(s,\alpha) \rightarrow L$ in $\Delta$ such that $\rho(v)=s$ and
     $w_v  \in L$, for the string $w_v\deff \rho(u_1)\cdots
     \rho(u_n)$. In particular, if $v$ is a leaf, then there must be a rule $(s,\alpha) \rightarrow L$ in $\Delta$ such that  $\rho(v)=s$ and $\varepsilon \in L$, where $\varepsilon$ denotes the empty string. 

A run $\rho$ of $\Aut{A}$ on $T$ is \emph{accepting}, if $T$'s root note $v$ is labeled with an accepting state of $\Aut{A}$, i.e., $\rho(v)\in F$.
A finite ordered $\Sigma$-labeled tree $T$ is \emph{accepted} by $\Aut{A}$, if there exists an accepting run of $\Aut{A}$ on $T$.
We write $L(\Aut{A})$ to denote the \emph{language} of $\Aut{A}$, i.e., the set of all finite ordered $\Sigma$-labeled trees that are accepted by $\Aut{A}$. 

To present an NTA $\Aut{A}=(\Sigma, S, \Delta, F)$ as an input for an algorithm, the string-languages $L$ that occur in the right-hand side of 
rules in $\Delta$ are specified by NFAs
$\Aut{A}_L=(\Sigma_L,Q_L,\delta_L,q_L,F_L)$, whose input alphabet
is $\Sigma_L \deff S$,
and where $Q_L$ is a finite set of states, $\delta_L \subseteq  (Q_L \times \Sigma_L \times Q_L)$ is a transition relation, 
$q_L \in Q_L$ is the initial state, and  $F_L \subseteq Q_L$ is the set of accepting states of $\Aut{A}_L$.
The \emph{size of $\Aut{A}_L$} is \ $\size{\Aut{A}_L}\deff
|Q_L|+|\delta_L|$, \ and the \emph{size of $\Aut{A}$} is the sum of
$|\Sigma|$, $|S|$, $|\Delta|$, and $\size{\Aut{A}_L}$, for all
$L\in\strL(\Aut{A})$, where 
$\strL(\Aut{A})$ is the set of all string-languages $L$ that occur in
the right-hand side of a rule in $\Delta$. 

In \cite{BjorklundMS08}, NTAs are used to describe schema information.
A Boolean query $Q$ is said to be \emph{valid with respect to} an NTA $\Aut{A}$
if $Q(T)=\Yes$ for every ordered $\Sigma$-labeled tree $T\in L(\Aut{A})$. 
The particular queries of interest here are 
\emph{Boolean $\CQ(\Child,\Desc)$} queries, i.e., Boolean conjunctive queries of 
schema $\tau_{u,\Sigma}^{\Desc}=\set{\Child,\Desc}\cup\setc{\Label_\alpha}{\alpha\in\Sigma}$, for a suitable alphabet $\Sigma$.
The problem \emph{``validity of Boolean $\CQ(\Child,\Desc)$ w.r.t. a tree automaton''} receives as input a Boolean $\CQ(\Child,\Desc)$ query $Q$ and
an NTA $\Aut{A}$, and the task is to decide whether $Q$ is valid with respect to $\Aut{A}$.

\begin{Theorem}[Björklund et al. \cite{BjorklundMS08}] \label{Thm:Validit}
Validity of Boolean $\CQ(\Child,\Desc)$ \allowbreak w.r.t.\ a tree automaton is $\TwoEXPTIME$-complete.
\end{Theorem}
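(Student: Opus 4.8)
The plan is to establish both directions separately: membership in $\TwoEXPTIME$ (the easier direction) and $\TwoEXPTIME$-hardness (the substantial one).

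For the upper bound, I would first observe that $Q$ fails to be valid w.r.t.\ $\Aut{A}$ exactly when there is some $T\in L(\Aut{A})$ with $T\not\models Q$, i.e.\ when $L(\Aut{A})\cap\overline{L(\Aut{B})}\neq\emptyset$, where $\Aut{B}$ recognises the set of trees that satisfy $Q$. Since $Q$ is a Boolean $\CQ(\Child,\Desc)$, its models are recognised by a nondeterministic (bottom-up) unranked tree automaton $\Aut{B}$ whose states record which query variables have already been placed and which descendant-obligations are still pending; this gives $\Aut{B}$ singly-exponentially many states in $\size{Q}$. Determinising $\Aut{B}$ and complementing it yields a deterministic automaton of doubly-exponential size, which I would intersect with $\Aut{A}$ (treating the NFA-specified horizontal languages of $\Aut{A}$ in the usual way) and test for emptiness. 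The emptiness test runs in time polynomial in the size of the product automaton, so the whole procedure stays within $\TwoEXPTIME$.

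For the hardness, I would reduce from the word problem for alternating Turing machines with an exponential space bound, which is $\TwoEXPTIME$-complete. Given such a machine $M$ and an input $w$ of length $n$, a candidate accepting computation is encoded as an ordered $\Sigma$-labeled tree in which the alternation structure of $M$ is reflected by the branching of the tree and each tree node carries a full configuration of $M$, laid out as a block of $2^{p(n)}$ tape cells, each cell tagged with its $p(n)$-bit binary address. The schema automaton $\Aut{A}$ would enforce all the \emph{local and regular} parts of correctness: the cell-format and the address counter running from $0$ to $2^{p(n)}-1$ inside each block, the tape alphabet and head/state markers, the initial block encoding $w$, the accepting leaves, and the branching discipline (an existential node has one configuration-successor, a universal node has all). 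Such properties are testable by NFAs of size polynomial in $n$, so $\Aut{A}$ can be produced in polynomial time. The one global property the automaton cannot check locally is the consistency of a configuration with its successor(s): whether cell $i$ evolves correctly, which depends on cells $i-1,i,i+1$ of the predecessor. Because corresponding cells lie $2^{p(n)}$ positions apart in the encoding, this is exactly where the $\Desc$-axis and the conjunctive query enter: $Q$ uses descendant edges to jump from a cell in one configuration to the same-address cell in a neighbouring configuration and uses $O(p(n))$ atoms to compare the two $p(n)$-bit addresses bit by bit and to witness a transition \emph{violation}. I would arrange the polarity so that $Q$ has a match precisely when the tree contains such an error; then $Q$ is valid w.r.t.\ $\Aut{A}$ iff every syntactically well-formed candidate contains an error, iff $M$ has no accepting computation on $w$. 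As $\TwoEXPTIME$ is closed under complement, this establishes $\TwoEXPTIME$-hardness.

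The main obstacle is to make a \emph{single} conjunctive pattern --- using only $\Child$, $\Desc$, and labels, with neither negation nor union --- detect the presence of an error across the exponential cell-gap without ever producing a spurious match on a genuinely correct encoding and without missing any real error. The delicate work is to split responsibilities so that all case distinctions over transition rules and positions are absorbed into the nondeterminism of $\Aut{A}$ (which may mark a guessed ``bad pair'' of cells), leaving $Q$ to verify only that the two marked cells carry equal $p(n)$-bit addresses; checking this equality with $p(n)$ address-bit comparisons, while guaranteeing that no unintended homomorphism exists, is the combinatorial heart of the reduction, and it is also what forces the second exponential and hence the tight $\TwoEXPTIME$ bound.
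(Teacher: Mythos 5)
First, on provenance: the paper does not prove this statement at all --- it is imported verbatim from Björklund, Martens, and Schwentick \cite{BjorklundMS08} and used as a black box; the closest in-paper material is the appendix proof of Theorem~\ref{Thm:Desc_Empti_ranked_ordered}, which adapts exactly this machinery to monadic datalog. Measured against that construction, your upper bound (compile $Q$ into an exponential-size tree automaton, determinize and complement, intersect with $\Aut{A}$, test emptiness) is the standard argument and is fine. The hardness half, however, has a genuine gap, in fact two. You lay each configuration out as a linear block of $2^{p(n)}$ cells, each carrying its explicit $p(n)$-bit address, and claim the schema automaton can enforce ``the address counter running from $0$ to $2^{p(n)}-1$'' with NFAs of size polynomial in $n$. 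That claim is false: verifying that consecutive $p(n)$-bit addresses increment requires relating bits that lie $p(n)$ positions apart, and any NFA doing this must remember an entire address, hence needs $2^{\Omega(p(n))}$ states (standard fooling-set argument on $\setc{u\#v}{v=u+1}$). This is precisely the obstacle the actual construction is designed around: there, each configuration is a \emph{full binary tree of height} $n$, so a cell's address is implicit in its root-to-leaf path and no counter ever has to be verified by the automaton; the navigation gadgets ($p\,0\,1\,\bot$ versus $p\,1\,0\,\bot$) exist only to make left/right orientation visible to the query, which checks ``same address'' level by level with one fixed $\Desc$-plus-depth-pinning pattern per level (the $\SameLevel^{\textit{LR}}_i$ trick in the appendix).

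Your fallback for the second difficulty is also broken. You propose that $\Aut{A}$ nondeterministically mark a guessed ``bad pair'' of cells, leaving $Q$ to verify only that the marked cells carry equal addresses. But validity quantifies \emph{universally} over $L(\Aut{A})$, so the automaton's nondeterminism works against you: every marking-variant of every well-formed encoding lies in $L(\Aut{A})$, and $Q$ must match all of them. Concretely, when $M$ rejects $w$, a well-formed encoding whose mark happens to sit on a forbidden-content pair at \emph{unequal} addresses is still accepted by $\Aut{A}$, yet $Q$ has no match on it, so $Q$ is not valid --- destroying the direction ``$M$ rejects $w$ $\Rightarrow$ $Q$ valid''. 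The query itself must locate the offending pair (existential quantification inside $Q$ is harmless); the real difficulty, which your proposal shunts onto the automaton, is expressing the disjunction over the polynomially many violating content pairs $(i,j)$ without having disjunction available. In the paper's datalog adaptation this is free (one rule per pair $(i,j)$, cf.\ the rules for the predicate $\delta$ in the appendix); for a single conjunctive query, Björklund et al.\ need their specific cell-gadget encoding to absorb it. As it stands, your reduction does not go through, and repairing it leads essentially to reconstructing their binary-tree encoding.
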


Our proof of Theorem~\ref{Thm:2EXPhard_ordered_unranked} is via a polynomial-time reduction from 
the problem \emph{validity of Boolean $\CQ(\Child,\Desc)$ w.r.t.\ a
tree automaton} to the \emph{QCP for Boolean $\mDatalog(\tauGK^{\Child,\Desc})$ on finite labeled ordered trees}.

Let $Q_{\CQ}$ be a Boolean $\CQ(\Child,\Desc)$-query, and let $\Aut{A}$ be an NTA with input alphabet $\Sigma$. We translate $Q_{\CQ}$ into 
an equivalent $\mDatalog(\tau_{u,\Sigma}^{\Desc})$-query $Q'_{\CQ}=(\PP,P)$: \ If $Q_{\CQ}$ is of the form 
\ $\Ans()\leftarrow R_1(u_1),\ldots,R_\ell(u_\ell)$ \ for relational atoms $R_1(u_1),\ldots,R_\ell(u_\ell)$, we choose an arbitrary variable $x$ that 
occurs in at least one of these atoms, we use a new unary idb-predicate $P$, and we let $\PP$ be the program consisting of the two rules
\ $P(x) \leftarrow R_1(u_1),\ldots,R_\ell(u_\ell)$ \ and 
\ $P(x) \leftarrow \Child(x,y),P(y)$.

Then, for every ordered $\Sigma$-labeled tree $T$ we have \,$Q'_{\CQ,\textit{Bool}}(T)=\Yes$\, iff \,$Q_{\CQ}(T)=\Yes$.
The following Lemma~\ref{Lem:NTA2MDATALOG} constructs, in time polynomial in the size of $\Aut{A}$, an
$\mDatalog(\tauGKSigma^{\Child})$-query $Q_{\Aut{A}}$ which is equivalent to $\Aut{A}$, i.e., for every ordered $\Sigma$-labeled tree $T$ we have
\,$Q_{\Aut{A},\textit{Bool}}(T)=\Yes$\, iff \,$T\in L(\Aut{A})$.

Note that \ $Q_{\CQ}$ is valid w.r.t.\ $\Aut{A}$ \ if, and only if, \ $Q_{\Aut{A},\textit{Bool}}\subseteq Q'_{\CQ,\textit{Bool}}$.
Thus, we obtain the desired polynomial-time reduction, showing that the QCP for Boolean $\mDatalog(\tauGK^{\Child,\Desc})$ on finite ordered 
$\Sigma$-labeled trees inherits the $\TwoEXPTIME$-hardness from the problem ``validity of Boolean $\CQ(\Child,\Desc)$ w.r.t.\ a
tree automaton''.
All that remains to finish the proof of Theorem~\ref{Thm:2EXPhard_ordered_unranked} is to prove 
the following Lemma~\ref{Lem:NTA2MDATALOG}.

\begin{Lemma} \label{Lem:NTA2MDATALOG}
For every NTA $\Aut{A}=(\Sigma,S,\Delta,F)$  there is an  $\mDatalog(\tauGKSigma^{\Child})$-query $Q=(\PP,P)$, 
such that for every finite ordered $\Sigma$-labeled tree $T$ we have
\,$\AF{\Bool{Q}}(T)=\Yes$\, iff \,$T\in L(\Aut{A})$.
Furthermore, $Q$ is constructible from $\Aut{A}$ in time polynomial in the size of $\Aut{A}$.
\end{Lemma}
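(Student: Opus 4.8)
The plan is to build $\PP$ so that it computes, in a bottom-up fashion, which states the NTA $\Aut{A}$ can assign to each node. For every state $s\in S$ I would introduce a unary idb-predicate $P_s$ with the intended meaning that $P_s(v)\in\T_\PP^\omega(\atoms(T))$ holds if and only if there is a run $\rho$ of $\Aut{A}$ on the subtree $T_v$ rooted at $v$ with $\rho(v)=s$. The query predicate $P$ is then defined by the rules $P(x)\leftarrow\Root(x),P_s(x)$, one for each accepting state $s\in F$; since $\AF{\Bool{Q}}(T)=\Yes$ iff the root of $T$ lies in $\AF{Q}(T)$, this immediately yields $\AF{\Bool{Q}}(T)=\Yes$ iff $T\in L(\Aut{A})$, provided the intended meaning of the $P_s$ is established.

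The crux is to implement a single transition rule $(s,\alpha)\rightarrow L$ of $\Delta$, which requires checking that the left-to-right sequence $\rho(u_1)\cdots\rho(u_n)$ of states assigned to the children $u_1,\ldots,u_n$ of $v$ lies in the regular language $L$, where $L$ is given by the NFA $\Aut{A}_L=(S,Q_L,\delta_L,q_L,F_L)$. I would simulate $\Aut{A}_L$ by threading its state through the children from left to right using the $\Fc$- and $\Ns$-axes. For each $L\in\strL(\Aut{A})$ and each $q\in Q_L$ I introduce a unary idb-predicate $A^L_q$ whose intended meaning is: $A^L_q(u)$ holds iff, writing $u_1,\ldots,u_i=u$ for $u$ together with its left siblings, there is a choice of states $t_1,\ldots,t_i$ with $P_{t_j}(u_j)$ for all $j$ such that $\Aut{A}_L$, started in $q_L$ and reading $t_1\cdots t_i$, reaches $q$. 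This is captured by the base rules $A^L_q(y)\leftarrow\Fc(x,y),P_t(y)$ for every $(q_L,t,q)\in\delta_L$, and by the inductive rules $A^L_q(y)\leftarrow\Ns(x,y),A^L_p(x),P_t(y)$ for every $(p,t,q)\in\delta_L$.

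It then remains to define the $P_s$. For a node $v$ with children, a rule $(s,\alpha)\rightarrow L$ fires precisely when $v$ carries label $\alpha$ and its last child reaches an accepting NFA state; this is expressed by $P_s(x)\leftarrow\Label_\alpha(x),\Child(x,y),\Ls(y),A^L_q(y)$ for every $q\in F_L$. The case of an empty child sequence (a leaf) is handled separately: if $\emptyword\in L$, i.e.\ $q_L\in F_L$, I add the rule $P_s(x)\leftarrow\Label_\alpha(x),\Leaf(x)$. Correctness follows by structural induction on $T$: the outer induction on $T_v$ shows that the $P_s$ meet their intended meaning, relying on an inner induction along the $\Fc/\Ns$-chain of children to show the $A^L_q$ correctly track the NFA computation. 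The two recursions are well-founded because every derivation of $P_s(v)$ uses only facts $P_t(u)$ for nodes $u$ strictly below $v$, so the least fixpoint of $\T_\PP$ computes exactly the intended relations.

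Finally, the construction is plainly polynomial: there are $|S|$ predicates $P_s$, one predicate $A^L_q$ per $L$ and $q\in Q_L$, and a single predicate $P$; the number of rules is bounded by $\sum_L|\delta_L|$ for the NFA-threading rules plus $O(|\Delta|\cdot\max_L|F_L|)$ for the $P_s$-rules and $|F|$ for the $P$-rules, all within $\size{\Aut{A}}$. I expect the main obstacle to be the faithful simulation of the regular languages $L$ inside monadic datalog: since idb-predicates must be unary, the NFA state cannot be carried in an extra argument and must instead be encoded by the family of predicates $A^L_q$ attached to the individual child nodes, with the $\Fc$-, $\Ns$-, and $\Ls$-axes doing the work of locating the first, intermediate, and last positions of each child sequence.
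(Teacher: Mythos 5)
Your proposal is correct and follows essentially the same route as the paper's proof: unary idb-predicates for the NTA states computed bottom-up, unary idb-predicates per NFA state threading the horizontal computation along the children via $\Fc$ and $\Ns$, acceptance detected at the $\Ls$-marked last child and passed to the parent via $\Child$, leaves handled through the $\emptyword\in L$ (i.e.\ $q_L\in F_L$) check, and root acceptance via $\Root$. The only difference is cosmetic: the paper routes acceptance through intermediate predicates $\Acc_L$ and $\Child_{\Acc_L}$, whereas you inline this into a single rule per pair $(s,\alpha)\to L$ and $q\in F_L$, which changes nothing of substance.
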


\begin{proof}
We construct a monadic datalog program $\PP$ which, for every node $v$ of $T$, computes information on \emph{all} states that $\Aut{A}$ can assume at node $v$,
i.e., all states $s\in S$ for which
there is a run $\rho$ of $\Aut{A}$ on the subtree of $T$ rooted at $v$, such that $\rho(v)=s$.
To this end, for every state $s\in S$, we will use an idb-predicate $s$. The query $Q_{\textit{Bool}}$ will accept an input tree $T$ if there is an
accepting state $s\in F$ such that
$s(\textit{root}^T) \in \T^{\omega}_{\PP}(T)$, where $\textit{root}^T$ denotes the root of $T$.
The program $\PP$ is constructed in such a way that it performs a generalised version of the well-known \emph{powerset construction}.

Recall that the transition rules of $\Aut{A}$ are of the form $(s,\alpha)\to L$, where $s\in S$, $\alpha\in\Sigma$, and $L$ is a regular string-language
over $S$, specified by an NFA $\Aut{A}_L=(\Sigma_L,Q_L,\delta_L,q_L,F_L)$ with $\Sigma_L=S$ and $\delta_L\subseteq (Q_L\times \Sigma_L\times Q_L)$.
W.l.o.g., we assume that the state sets of all the NFAs are mutually disjoint, and disjoint with $S$.

To emulate the standard powerset construction of the NFA $\Aut{A}_L$, we use an idb-predicate $q$ for every state $q\in Q_L$, and an
extra idb-predicate $\Acc_L$. If $u_1,\ldots,u_n$ are the children of a node $v$
in an input tree $T$, the NFA $\Aut{A}_L$ processes the strings over alphabet $S$ that are of the form $s_1\cdots s_n$, where
$s_i$ is a state that $\Aut{A}$ can assume at node $u_i$ (for every $i\in\set{1,\ldots,n}$).
We start by letting $\PP_L\deff \emptyset$ and then add to $\PP_L$ the following rules:
For the initial state $q_L$ of $\Aut{A}_L$, consider all $s\in S$ and $q\in Q_L$ such that   
$(q_L,s, q)\in \delta_L$, and add to $\PP_L$ the rule 
\[ q(x) \ \leftarrow \ \Fc(y,x),\ s(x)\,.  \]
Afterwards, for every transition $(q,s, q')\in \delta_L$, add to $\PP_L$ the rule
\[ q'(x') \ \leftarrow \  q(x),\ \Ns(x,x'),\ s(x')\,.  \]
Finally, for every accepting state $q\in F_L$ of $\Aut{A}_L$, add to $\PP_L$ the rule
\[\Acc_L(x) \ \leftarrow \ \Ls(x),\ q(x)\,.\]
Clearly, the program $\PP_L$ can be constructed in time polynomial in $\size{\Aut{A}_L}$.

Now, we are ready to construct the monadic datalog program $\PP$ that simulates the NTA $\Aut{A}$. 
We start by letting $\PP$ be the disjoint union of the programs $\PP_L$, for all $L\in\strL(\Aut{A})$.
The computation of $\Aut{A}$ on an input tree $T$ starts in the leaves of $T$. Thus, to initiate the simulation of 
$\Aut{A}$, we consider every rule $(s,\alpha) \rightarrow L$ in $\Delta$, where $\varepsilon \in L$.\footnote{Note that ``$\emptyword\in L$\,?'' can be checked 
by simply checking whether $q_L\in F_L$.}
For each such rule, we add to $\PP$ the rule 
\[ s(x) \ \leftarrow \  \Label_{\alpha}(x),\ \Leaf(x)\,.\]
Note that for each $L\in\strL(\Aut{A})$, the program $\PP_L$ ensures that every \emph{last}~sibling $u_n$ of a node $v$ will be marked by $\Acc_L(u_n)$ 
iff the states of $\Aut{A}$ assigned to $u_n$ and its siblings form a string in $L$. 
To transfer this information from the last sibling to its parent node, we add to $\PP$ the rule
\[ \Child_{\Acc_L}(y) \ \leftarrow \ \Child(y,x),\ \Ls(x),\ \Acc_L(x)\,, \]
where $\Child_{\Acc_L}$ is a new idb-predicate, for every $L\in\strL(\Aut{A})$.
\\
Afterwards, we consider every rule \ $(s,\alpha) \rightarrow L$ \ in $\Delta$, and add to $\PP$ the rule
\[  s(x) \ \leftarrow \ \Child_{\Acc_L}(x),\ \Label_{\alpha}(x)\,. \]
Finally, to test if $\Aut{A}$ accepts an input tree $T$, we add rules to test whether $T$'s root is assigned an accepting state of $\Aut{A}$. 
To this end, we consider every accepting state $s\in F$ of $\Aut{A}$ and add to $\PP$ the rule
\[ P(x) \ \leftarrow \ \Root(x),\  s(x)\,. \]
This finishes the construction of the program $\PP$ and the query $Q=(\PP,P)$. Clearly, $\PP$ is a monadic datalog program of schema
$\tauGKSigma^{\Child}$, and $Q$ can be constructed in time polynomial in $\size{A}$.
It is not difficult, but somewhat tedious, to verify that, as intended by the construction, indeed 
for every finite ordered $\Sigma$-labeled tree $T$ we have $Q_{\textit{Bool}}(T)=\Yes$ if, and only if, there exists an accepting run of
the NTA $\Aut{A}$ on $T$.
This completes the proof of Lemma~\ref{Lem:NTA2MDATALOG}.
\end{proof}

\section{$\TwoEXPTIME$-hardness on Unordered Trees }\label{section:Emptiness}

Our next aim is to transfer the statement of
Theorem~\ref{Thm:2EXPhard_ordered_unranked} to \emph{unordered} trees.
Precisely, we will show the following.

\newcounter{CounterThm:Desc_QCP_unranked_unordered}
\setcounter{CounterThm:Desc_QCP_unranked_unordered}{\value{theorem}}

\begin{Theorem}\label{Thm:Desc_QCP_unranked_unordered}
The QCP for Boolean $\mDatalog(\tauu^{\Desc})$ on finite labeled unordered trees is $\TwoEXPTIME$-hard. 
\end{Theorem}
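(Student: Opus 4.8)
\emph{Overall strategy.}
The plan is to give a polynomial-time reduction from the word problem for alternating Turing machines (ATMs) whose space is bounded by $2^{p(n)}$ for a polynomial $p$; by \cite{DBLP:journals/jacm/ChandraKS81} this problem is $\TwoEXPTIME$-complete, and since $\TwoEXPTIME$ is closed under complement it is enough to reduce its complement to the QCP. Following the approach of \cite{BjorklundMS08}, given such an ATM $M$ and an input $w$ of length $n$, we would construct in polynomial time two Boolean $\mDatalog(\tauu^{\Desc})$-queries $Q_1$ and $Q_2$ over a suitable alphabet $\Sigma$ such that $Q_1\subseteq Q_2$ holds if, and only if, $M$ does \emph{not} accept $w$; a tree witnessing $Q_1\not\subseteq Q_2$ will then be an error-free encoding of an accepting computation of $M$ on $w$. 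In contrast to the ordered case (Theorem~\ref{Thm:2EXPhard_ordered_unranked}), the schema $\tauu^{\Desc}=\set{\Child,\Desc}$ provides neither a sibling order nor the predicates $\Root$, $\Leaf$, $\Ls$, so the argument has to be carried out directly rather than inherited through tree automata.

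\emph{The encoding.}
We would encode a computation tree of $M$ as an unordered $\Sigma$-labeled tree $T$ in which each tree node standing for a configuration carries, as a subtree, a listing of the $2^{p(n)}$ tape cells of that configuration; every cell node is labeled with its tape symbol (and, for the cell under the head, with the current state) and is equipped with a binary \emph{address} of length $p(n)$, realised by a descendant-chain of bit-nodes whose labels encode both the bit value and the bit position $i\in\set{1,\dots,p(n)}$. As $p(n)$ is polynomial, each address gadget has polynomial size while rendering the exponentially many cells individually addressable. To keep the acceptance condition expressible in monotone datalog, existential configurations are encoded with exactly one child (the guessed accepting successor), whereas universal configurations are encoded with one child per applicable transition of $M$; since $M$ has a constant number of transitions the branching degree is bounded by a constant, and each successor is labeled by the transition that produced it.

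\emph{The two queries.}
The query $Q_1$ would, bottom-up, derive a unary predicate $\Acc$ meaning ``this configuration node is the root of an encoded accepting sub-computation'': an accepting halting configuration derives $\Acc$ directly; an existential node derives $\Acc$ from its single child via a rule $\Acc(x)\leftarrow\Child(x,y),\Acc(y)$; and a universal node derives $\Acc$ by a \emph{bounded} conjunction over its transition-labeled children, which is expressible precisely because both the branching degree and the transition labels are fixed. The Boolean query $Q_1$ then answers $\Yes$ on $T$ iff the root matches the start-configuration pattern and satisfies $\Acc$; here the Boolean root-membership semantics supplies the missing $\Root$ predicate for free, since only the root is tested. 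The query $Q_2$ would detect any encoding \emph{error}, thereby playing the role that the schema automaton played in the ordered case: malformed or incomplete address counters, duplicated or missing cells, an illegal transition between consecutive configurations, a ``leaf'' whose configuration is not actually accepting, or a universal node lacking a required successor. Each error is flagged by a rule whose body is a conjunctive query of size polynomial in $n$, and the resulting unary error predicate $E$ is propagated to the root by a rule $E(x)\leftarrow\Child(x,y),E(y)$.

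\emph{Correctness and the main obstacle.}
With these queries, $Q_1\not\subseteq Q_2$ would hold exactly when there is a tree whose root is an $\Acc$-marked start configuration and which is error-free, i.e.\ a genuine accepting computation of $M$ on $w$, giving the claimed equivalence. Soundness (a real accepting computation yields an encoding passing $Q_1$ but not $Q_2$) is the easy direction. The hard part, and the tedious core, is completeness of the error detection: one must verify that \emph{every} way in which a tree passing $Q_1$ could fail to encode a real computation is caught by some rule of $Q_2$. The single genuinely delicate gadget is the transition-consistency check, which must, within one polynomial-size conjunctive rule body, correlate a cell of a configuration with the equally-addressed cell of its successor configuration and flag any violation of $M$'s local transition rule. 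This is where the $\Desc$-axis is indispensable: it lets a single rule jump across the exponentially many intervening nodes so as to bind both cells together with all $p(n)$ of their address bits. The monadic restriction (all idb predicates have arity~$1$) is \emph{not} an obstacle here, because each rule \emph{body} may still be an arbitrary conjunctive query over the binary edb predicates $\Child$ and $\Desc$, so the whole two-cell comparison pattern fits into one rule of polynomial size. The remaining counter-correctness checks --- address $0$ present, address $2^{p(n)}-1$ present, consecutive cells differing by exactly one in binary, and distinctness of addresses --- are handled by analogous polynomial-size descendant gadgets.
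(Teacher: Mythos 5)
Your overall strategy coincides with the paper's: reduce the word problem for exponential-space bounded ATMs to the complement of the QCP, with a positive query $Q_1$ certifying an accepting computation bottom-up and an error-detecting query $Q_2$, so that a counterexample to $Q_1\subseteq Q_2$ is exactly an error-free encoding of an accepting computation. The gap sits in the one place you yourself call "genuinely delicate": correlating equally-addressed cells of consecutive configurations. In your encoding each address bit is stored as a \emph{label} (position $i$, value $b$) on a node of a chain. But then no polynomial-size rule body can express that two cells carry the \emph{same} address: a rule body is a conjunctive query, and ``bit $i$ of $x$ equals bit $i$ of $y$'' is the disjunction ``both labels are $(i,0)$, or both are $(i,1)$'', which a single conjunctive body cannot express; distributing this disjunction over all $p(n)$ positions yields $2^{p(n)}$ rule bodies, and monadic datalog cannot cache the comparison in an auxiliary predicate because all idb predicates are unary. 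Your stated reason why $\Desc$ suffices --- that it lets one rule ``jump across the exponentially many intervening nodes so as to bind both cells together with all $p(n)$ of their address bits'' --- addresses only reachability, not equality: binding the variables is trivial, forcing the labels to agree is precisely what a conjunctive body cannot do. The same obstacle breaks your other equality-flavoured checks (the counter increment ``consecutive cells differ by exactly one'', and ``equal address'' inside a transition-violation pattern), no matter whether you place them in $Q_1$ or in $Q_2$; only \emph{inequality} of addresses is expressible with polynomially many rules. Also, ``missing cells'' cannot be a $Q_2$-error at all, since absence of a pattern is not itself a monotone pattern.

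The missing idea, which the paper takes from Björklund et al., is to encode the bit \emph{structurally} rather than in a label. Each configuration is a full binary tree of height $n$; a cell's address is the left/right path to its leaf; and every skeleton node carries a navigation gadget $p$--$0$--$1$--$\bot$ (left child) or $p$--$1$--$0$--$\bot$ (right child), so the bit is encoded in the \emph{depth} of the unique $1$-labeled node below $p$. Equality of the $i$-th address bits of two cells in successive configurations is then expressed purely by path arithmetic: one existentially quantified apex variable $z$, two $\Desc$-atoms selecting the $1$-nodes of the two gadgets, and the two fixed-length constraints $\Child^{i+4}(z,\cdot)$ and $\Child^{i+5}(z,\cdot)$, which are simultaneously satisfiable if and only if both skeleton nodes are left children or both are right children. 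Here $\Desc$ is indispensable because the $1$-node lies at depth $1$ or $2$ below $p$ depending on the bit, and a conjunctive body cannot branch on that case distinction. The tree-shaped address also removes your counter-increment and presence checks entirely: the $2^n$ cells exist because the configuration tree is complete of height $n$, which $Q_1$ verifies positively by height-counting predicates. Without a gadget of this kind, your flat labeled-counter encoding does not support the cell-correspondence test, and the reduction does not go through.
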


For proving Theorem~\ref{Thm:Desc_QCP_unranked_unordered}, we cannot directly build on Björklund et al.'s
Theorem~\ref{Thm:Validit}, since their NTAs explicitly refer to
\emph{ordered} trees.

By constructing suitable reductions, we can show that proving
Theorem~\ref{Thm:Desc_QCP_unranked_unordered} 
boils down to proving the following
Theorem~\ref{Thm:Desc_Empti_ranked_ordered}, which deals with the
\emph{emptiness problem} on trees over a \emph{ranked} alphabet.

For the remainder of this section, $\Sigma'$ will denote a \emph{ranked}
finite alphabet. I.e., $\Sigma'$ is a finite set of symbols, and
each symbol $\alpha\in\Sigma'$ is equipped with a fixed \emph{arity}
$\ar(\alpha)\in\NN$. An unordered \emph{ranked} $\Sigma'$-labeled tree
is an unordered $\Sigma'$-labeled tree where each node labeled
with symbol $\alpha\in\Sigma'$ has exactly $\ar(\alpha)$ children.
For a Boolean $\mDatalog(\tauuSigmaStrich^{\Desc})$-query $Q$, we say that 
$Q$ is \emph{unsatisfiable by unordered ranked trees} (in symbols:
$Q=\emptyquery$) if for every
finite unordered ranked $\Sigma'$-labeled tree $T$ we have $Q(T)=\emptyset$. 
The \emph{emptiness problem for Boolean $\mDatalog(\tauuSigmaStrich^{\Desc})$ on
finite unordered ranked $\Sigma'$-labeled trees} receives as input a
Boolean $\mDatalog(\tauuSigmaStrich^{\Desc})$-query $Q$, and the task is to
decide whether $Q=\emptyquery$. 
The main technical step needed for proving
Theorem~\ref{Thm:Desc_QCP_unranked_unordered} is to prove the following.

\newcounter{CounterThm:Desc_Empti_ranked_ordered}
\setcounter{CounterThm:Desc_Empti_ranked_ordered}{\value{theorem}}
 
 \begin{Theorem} \label{Thm:Desc_Empti_ranked_ordered}
 There is a ranked finite alphabet $\Sigma'$, such that 
 the emptiness problem for Boolean $\mDatalog(\tauuSigmaStrich^{\Desc})$ on
 finite unordered ranked $\Sigma'$-labeled trees is \TwoEXPTIME-hard.
 \end{Theorem}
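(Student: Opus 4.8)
The plan is to reduce from the word problem for exponential-space bounded \emph{alternating} Turing machines, which is $\TwoEXPTIME$-complete by Chandra, Kozen, and Stockmeyer \cite{DBLP:journals/jacm/ChandraKS81}. Fixing once and for all a universal such machine lets us keep the tape alphabet, and hence the ranked alphabet $\Sigma'$, constant; the concrete instance $\langle M,w\rangle$ of length $n$ is then hard-coded into the query, matching the theorem's promise of a \emph{single} fixed $\Sigma'$. After standard normalisations we may assume the machine uses exactly $N\deff 2^{p(n)}$ tape cells for a fixed polynomial $p$, strictly alternates between existential and universal configurations, and halts in a designated accepting state. The goal is to construct, in polynomial time, a Boolean $\mDatalog(\tauuSigmaStrich^{\Desc})$-query $Q=(\PP,P)$ with $Q\neq\emptyquery$ iff the machine accepts $\langle M,w\rangle$. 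Since $\TwoEXPTIME$ is closed under complement, the emptiness problem ``$Q=\emptyquery$\,?'' then inherits $\TwoEXPTIME$-hardness.

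\textbf{Encoding.} I would let the finite unordered ranked $\Sigma'$-trees encode candidate accepting computation trees. Each configuration is a complete binary \emph{address tree} of depth $p(n)$ whose $N$ leaves are the tape cells; the leaf at address $b_1\cdots b_{p(n)}\in\set{0,1}$-strings is labelled with its tape symbol together with head/state information, all binary-encoded over the fixed $\Sigma'$. The arities in $\Sigma'$ force every address node to carry exactly its bit-$0$ and bit-$1$ children and every configuration node to carry exactly the successor configurations dictated by the machine's branching, so the mere fact that $T$ is a well-formed ranked $\Sigma'$-tree already pins down the skeleton of the encoding and spares us from having to reject malformed shapes \emph{negatively}. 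Crucially, to make the transition check local, I would \emph{interleave} successive configurations: each address node of a configuration $C$ additionally carries, as a ranked child, the corresponding address node of each successor of $C$, so that a cell of $C$ and the cell at the same address in a successor are joined by a short $\Child$/$\Desc$-path.

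\textbf{The query.} The program $\PP$ verifies the computation bottom-up by propagating a unary mark $\textit{acc}$. Leaves encoding a halting accepting configuration are marked directly; a configuration node is marked $\textit{acc}$ exactly when (i) the transition leading to each of its successors is legal and (ii) its successors carry $\textit{acc}$ in the pattern required by its type, namely \emph{some} successor for an existential node and \emph{all} successors for a universal node. Because the ranks fix the children and the distinct slots are told apart by their labels, both the ``some'' and the ``all'' quantification are expressible by a bounded set of rules over the labelled child-slots. The predicate $P$ finally marks the $\Root$ when the initial configuration (state $q_0$, input $w$ in the first $n$ cells, blanks elsewhere) is present and carries $\textit{acc}$; the exponentially large blank region is checked succinctly by a recursive predicate ``all cells of this address subtree are blank''. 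Soundness does \emph{not} require globally forbidding, say, multiple head cells: since the root is pinned to the initial configuration and every edge used in the $\textit{acc}$-justification is a \emph{completely} checked transition, head-uniqueness and overall legitimacy follow by induction along every root-to-leaf path, so monotone positive checking suffices.

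\textbf{Main obstacle.} The crux is the legality test for a single transition, which compares the $N=2^{p(n)}$ cells of two configurations, demanding that every cell outside the head window is copied unchanged and that the window is updated by a rule of the machine. Monadic Datalog has only \emph{unary} idb-predicates and therefore cannot store the binary relation ``these two cells share an address,'' so two independently addressed trees cannot be compared directly. The interleaving of configurations together with the $\Desc$-axis is exactly what resolves this: it turns binary address-matching into unary mark-propagation along descendant paths, in the spirit of the defect gadget of Björklund, Martens, and Schwentick \cite{BjorklundMS08}, and it is here that the descendant axis is indispensable, since without it one is confined to the $\EXPTIME$ regime. Establishing that this gadget is both sound (the root is marked only for genuine accepting computations) and complete (every accepting computation admits a marked encoding) is the delicate part I would treat most carefully; fixing $\Sigma'$, writing the initial-configuration rules, and bounding $\size{Q}$ polynomially is then routine.
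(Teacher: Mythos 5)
Your high-level skeleton matches the paper's: a reduction from the word problem for exponential-space bounded ATMs, an encoding of accepting computation trees as ranked $\Sigma'$-trees in which each configuration's $2^{p(n)}$ cells are the leaves of a binary address tree of polynomial depth, bottom-up propagation of an acceptance mark through the existential/universal branching, and the observation that $\Desc$ is what makes the cell comparison feasible. The gap is in your central encoding choice, the \emph{interleaving}. Once the node for address $\beta$ of a successor configuration $C'$ is made a child of the node for address $\beta$ of $C$, the configuration $C'$ no longer has address-tree edges of its own: its cells are scattered over $C$'s tree, and after $t$ steps the cells of configuration $C_t$ at \emph{adjacent} tape addresses $\alpha$ and $\alpha{+}1$ sit at tree-distance up to $2t+2p(n)$ (up through $t$ time-levels and the address tree, then back down a different branch). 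Since an exponential-space ATM can run for doubly exponentially many steps, $t$ is not polynomially bounded, and no polynomial-size family of $\mDatalog(\Child,\Desc)$ rules can relate two such nodes: $\Child^{d}$ works only for fixed $d$, and $\Desc$ cannot enforce equality of depths. But the legality test inherently needs cross-address checks within a single configuration: the head must reappear at an \emph{adjacent} address, each configuration must contain exactly one head, and blanks must propagate between neighbors. Your claim that head-uniqueness ``follows by induction since every edge used is a completely checked transition'' silently presupposes exactly this unavailable check; with purely vertical (same-address) constraints the head can teleport or duplicate, so non-computations get accepted and the reduction is unsound. (A secondary overstatement: ranks fix the \emph{number} of children, not their labels, so well-formedness of a ranked tree does not ``pin down the skeleton''; structural label checks must still be programmed, as the paper does with its $\structure$ predicate --- but that part is routine to repair.)

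The paper resolves the tension in the opposite way, and it is instructive to see why that direction works. Each configuration is kept as a \emph{contiguous} full binary tree of height $n$, so all horizontal constraints --- the paper's (H1)--(H3), plus uniqueness of the head cell --- are checked by propagating unary ``leftmost/rightmost cell type'' predicates up that tree, which monadic datalog can do. The same-address matching is then needed only across \emph{adjacent} configuration trees, and there the Bj\"orklund--Martens--Schwentick navigation gadgets apply: $\Desc$ jumps to the $1$-labelled node inside a gadget, and exact-distance atoms $\Child^{i+4}(z,\cdot)$, $\Child^{i+5}(z,\cdot)$ with $i\le n$ force both nodes to have the same left/right orientation, level by level. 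The rule sizes stay polynomial precisely because consecutive configuration trees are adjacent and every counted distance is bounded by $n+O(1)$. Any attempt to patch your interleaved encoding so that adjacent-address checks become expressible will, in effect, push you back to this design; I would therefore redo the encoding along the paper's lines rather than repair yours.
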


For the proof of Theorem~\ref{Thm:Desc_Empti_ranked_ordered}, we can
build on the approach 
used by Björklund et al.\ for proving 
Theorem~\ref{Thm:Validit}:
As in \cite{BjorklundMS08}, we proceed by a reduction from the word problem for 
exponential-space bounded alternating Turing machines, which is known
to be $\TwoEXPTIME$-complete
\cite{DBLP:journals/jacm/ChandraKS81}. The remainder of this section
is devoted to the proof of Theorem~\ref{Thm:Desc_Empti_ranked_ordered}.

An \emph{alternating Turing machine} (ATM) is a nondeterministic Turing machine
$\Aut{A}=(Q,\Sigma,\Gamma,\delta,q_0)$ 
whose 
state space $Q$ is partitioned into \emph{universal states} $Q_{\forall}$,
\emph{existential states} $Q_{\exists}$, an accepting state $q_a$, and
a \emph{rejecting state} $q_r$.
The ATM's tape cells are numbered 0,1,2,\ldots.
A \emph{configuration} of $\Aut{A}$ is a finite string of
the form $w_1qw_2$ with $w_1, w_2 \in \Gamma^*$ and $q \in Q$,
representing the situation where the ATM's tape contains the word $w_1w_2$,
followed by blanks, the ATM's current state is $q$, and the head is positioned at the first letter of
$w_2$.
A configuration $w_1qw_2$ is a \emph{halting} (\emph{universal}, \emph{existential},
resp.)
configuration if $q\in\set{q_a,q_r}$ ($q\in Q_\forall$, $q\in
Q_\exists$, resp.).
W.l.o.g., no halting configuration has a successor configuration, and every halting
configuration is of the form $qw$.  
A
 \emph{computation tree} $T_{\Aut{A}}$ of the ATM $\Aut{A}$ on input
$w \in \Sigma^*$ is a tree labeled with configurations of $\Aut{A}$,
such that the root of $T_{\Aut{A}}$ is labeled by $q_0w$, and for each
node $v$ of $T_{\Aut{A}}$ labeled by $w_1qw_2$, 
\begin{itemize}
\item if $q \in Q_\exists$, then $u$ has exactly one child, and this
  child is labeled with a successor configuration of $w_1qw_2$,
\item if $q \in Q_\forall$, then $u$ has a child $v$ for every
  successor configuration $w_1'q'w_2'$, and $v$ is labeled by
  $w_1'q'w_2'$, 
\item if $q \in \set{q_a,q_r}$, then $u$ is a leaf of $T_{\Aut{A}}$.
\end{itemize}
A computation tree is
\emph{accepting} if all its branches are finite and all its leaves are
labeled by configurations with state $q_a$. 
The \emph{language} $L(\Aut{A})$ of $\Aut{A}$ is defined as the set of all words $w \in \Sigma^*$,
for which there exists an accepting computation tree of \Aut{A} on
$w$. 
W.l.o.g., we will assume that the ATM is \emph{normalized}, i.e., every
non-halting configuration has precisely two successor configurations,
each universal step only affects the state of the machine, and  the
machine always alternates between  universal and existential states. 

The proof of Theorem~\ref{Thm:Desc_Empti_ranked_ordered}
proceeds by a reduction from the word problem for exponential-space
bounded ATMs $\Aut{A}$.
The reduction itself will be done from an ATM with empty input word.
To this end, we construct, in the canonical way, for the given exponential-space bounded ATM $\Aut{A}$ and the
given word $w\in\Sigma^*$ an ATM $\Aut{A}_w$ that works in space exponential in
the size of $w$ and accepts the empty word if, and only if,
$\Aut{A}$ accepts $w$. 
Since $\Aut{A}$ is exponential-space bounded, the non-blank portion of
the ATM's tape during a computation of $\Aut{A}_w$ will never be longer that
$2^n$, where $n$ is polynomial in the size $|w|$ of the original input.

The crucial point of the reduction is to find an encoding of
computation trees of $\Aut{A}_w$ on empty input, which can be verified by a
$\mDatalog(\tauuSigmaStrich^{\Desc})$-query that can be constructed in time
\emph{polynomial} in the size of $\Aut{A}_w$.
For this, it is necessary to find a smart encoding of the
tape inscription of length $2^n$. This encoding shall allow to compare the content of every tape
cell with the same tape cell of the successor configuration. To
achieve this, we adapt
the encoding of Björklund et al. \cite{BjorklundMS08}; in
particular, we use their very elegant
``navigation gadgets''.

We choose a fixed ranked finite alphabet $\Sigma'$ which, among other
symbols, contains
a 0-ary symbol $\bot$, 
unary symbols $r,p,m,0,1$, 
binary symbols $\CTEleft,\CTEright$, and 
3-ary symbols $\CTA$ and $s$.
Consider a computation tree $T_{\Aut{A}_w}$ of a normalized ATM
 $\Aut{A}_w=(Q,\Sigma,\Gamma,\delta,q_0)$,
see Figure~\ref{Fig:computationTree}.

\setlength{\abovecaptionskip}{0cm} 
 \setlength{\textfloatsep}{10pt plus0pt minus3pt}
\begin{figure}[thb]
\begin{center}
  \begin{tikzpicture}[->,>=stealth',  ]
  	\tikzset{
  treenode/.style = {align=center, inner sep=0pt, text centered,
    font=\sffamily},  	
  arn_red/.style = {treenode, circle, white, font=\sffamily\bfseries, draw=gray,
    fill=white, text width=1em},
  arn_blue/.style = {treenode, circle, black, draw=black,  
    text width=1em, thick},
	}  
  
 \node at (-5.3,0) {$(a)$};  
  \node at (0,0) {$(b)$}; 

    \begin{scope}[xshift=-3.5cm,scale=0.5]
    
		 \node[arn_blue] at (0,-1) (v1) {$v_1$};	
    	 \node at (2,-1) (1abel) {$w'_1q_1w''_1$};	
    
         \node[arn_blue] at (1,-2.5) (v2) {$v_2$};	
         \node at (3,-2.5) (1abel) {$w'_2q_2w''_2$};		
         
         \node[arn_blue] at (1,-4) (v3) {$v_3$};	
                \node at (3,-4) (1abel) {$w'_3q_3w''_3$};		

         \node[arn_blue] at (2,-5.5) (v4) {$v_5$};	
          \node at (4,-5.5) (1abel) {$w'_5q_5w''_5$};
         \node[arn_blue] at (0,-5.5) (v5) {$v_4$};	
          \node at (-2,-5.5) (1abel) {$w'_4q_4w''_4$};

         \node[arn_red] at (2,-7) (v8) {};	         
         \node[arn_red] at (0,-7) (v7) {};

         \node[arn_red] at (-1,-2.5) (v6) {};

        \path[draw, dotted,-] 	(0,0) edge (v1); 
	\path
    	(v1) edge (v2) 
    	    	(v2) edge (v3)
    	    	(v3) edge (v4)
    	   		(v3) edge (v5)
    	    	(v1) edge[ dotted,gray] (v6) 
	    	   	(v5) edge[ dotted,gray] (v7)
	    	   	(v4) edge[ dotted,gray] (v8) ;

    	\end{scope}      
    	
    	\begin{scope}[xshift=2cm,scale=0.5]
  	\tikzset{
  treenode/.style = {align=center, inner sep=0pt, text centered,
    font=\sffamily},  	
  arn_red/.style = {treenode, circle, white, font=\sffamily\bfseries, draw=gray,
    fill=white, text width=0.6em},
  arn_blue/.style = {treenode, circle, black, draw=black,  
    text width=0.6em, thick},
	}    
		 \node[arn_blue] at (0,-1) (v1) {};	
    	 \node[anchor=west] at (0.2,-1) (1abel) {$\CTA$};	
    
         \node[arn_blue] at (2,-2.5) (v2) {};	
         \node[anchor=west] at (2.2,-2.5) (1abel) {$\CTEright$};		
         \node[arn_blue] at (-2,-2.5) (v2_r) {};	
         \node[anchor=east] at (-2.2,-2.5) (1abel) {$r$};		
         
         \node[arn_blue] at (0,-4) (v3_r) {};	
                  \node[anchor=east] at (-0.2,-4) (1abel) {$r$};		
            \node[arn_blue] at (4,-4) (v3) {};	
                \node[anchor=west] at (4.2,-4) (1abel) {$\CTA$};		

         \node[arn_blue] at (2,-5.5) (v4_r) {};
        \node[anchor=east] at (1.8,-5.5) (1abel) {$r$};		
         \node[arn_blue] at (4,-6.5) (v4) {};	
          \node[anchor=west] at (4.2,-6.5) (1abel) {$\CTEleft$};
         \node[arn_blue] at (6,-5.5) (v5) {};	
          \node[anchor=west] at (6.2,-5.5) (1abel) {$\CTEright$};

         \node at (3.3,-7.5) (v8) {};	         
         \node at (4.8,-7.5) (v8_r) {};	         
         \node at (5,-7.5) (v7) {};	         
         \node at (7,-7.5) (v7_r) {};

         \node[arn_red] at (0,-2.5) (v6) {};

        \path[draw, dotted,-] 	(0,0) edge (v1); 
        \path
        	(v2_r) edge[-] (-1,-4.5) 
        	(-1,-4.5) edge[-] (-3,-4.5) 
        	(-3,-4.5) edge[-] (v2_r);

        \path
        	(v3_r) edge[-] (-1,-6) 
        	(-1,-6) edge[-] (1,-6) 
        	(1,-6) edge[-] (v3_r);
        \path
        	(v4_r) edge[-] (1,-7.5) 
        	(1,-7.5) edge[-] (3,-7.5) 
        	(3,-7.5) edge[-] (v4_r);

	\path
    	(v1) edge (v2) 
    	    	(v1) edge (v2_r) 
    	    	(v2) edge (v3)
    	    	(v2) edge (v3_r)    	    	
    	    	(v3) edge (v4)
	   	    	(v3) edge (v4_r)
    	   		(v3) edge (v5)
    	    	(v1) edge[ dotted,gray] (v6) 
	    	   	(v5) edge[ dotted,gray,-] (v7)
	    	   	(v5) edge[ dotted,gray,-] (v7_r)
	    	   	(v4) edge[ dotted,gray,-] (v8_r)
	    	   	(v4) edge[ dotted,gray,-] (v8) ;

 \end{scope}

\end{tikzpicture} 
\end{center}
\caption[Computation Tree]{(a)~A part of a computation tree
  $T_{\Aut{A}_w}$ where the node $v_1$ labeled by $w'_1q_1w''_1$
  is universal, and its children are existential. The node
  $v_2$ labeled by $w'_2q_2w''_2$ is the right child of $v_1$. The node
  $v_2$ has one child, the univeral node $v_3$.
  (b)~The replacement of $v_1$ is a tree with a root node labeled
  by $\CTA$ and with three children, the first is labeled by $r$ and is
  the root of the subtree encoding the configuration in $v_1$, the
  second is the replacement for its left child, and the third is the
  replacement for its right child. The obtained tree $T \deff
  \enc(T_{\Aut{A}_w})$ is an unordered ranked
  $\Sigma'$-labeled tree.}\label{Fig:computationTree} 
\end{figure}
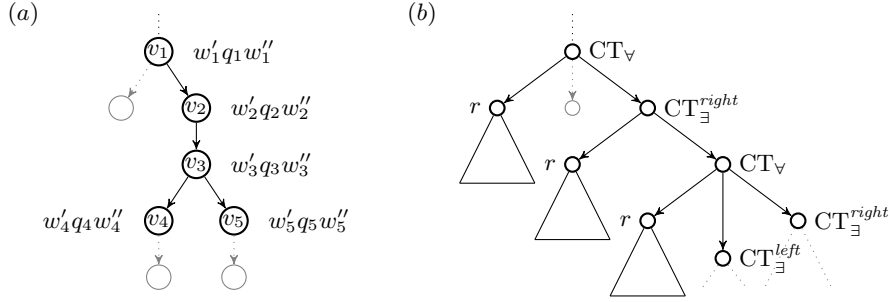

\begin{figure}[thb]
\begin{center}
  \begin{tikzpicture}[->,>=stealth']
  	\tikzset{
  treenode/.style = {align=center, inner sep=0pt, text centered,
    font=\sffamily},  	
  arn_red/.style = {treenode, circle, white, font=\sffamily\bfseries, draw=gray,
    fill=white, text width=1em},
  arn_blue/.style = {treenode, circle, black, draw=black,  
    text width=1em, thick},
	}  
  
 \node at (-5.3,0) {$(a)$};  
  \node at (0.1,0) {$(b)$}; 
    \begin{scope}[xshift=-3cm,scale=0.5]
  	\tikzset{
  treenode/.style = {align=center, inner sep=0pt, text centered,
    font=\sffamily},  	
  arn_red/.style = {treenode, circle, white, font=\sffamily\bfseries, draw=gray,
    fill=white, text width=0.6em},
  arn_blue/.style = {treenode, circle, black, draw=black,  
    text width=0.6em, thick},
	}    
		 \node[arn_blue] at (0,-1) (v1) {};	
    	 \node[anchor=west] at (0.2,-1) (1abel) {$s$};	
    
         \node[arn_blue] at (2,-2.5) (v2) {};	
         \node[anchor=west] at (2.2,-2.5) (1abel) {$s$};		
         \node[arn_blue] at (-2,-2.5) (v2_r) {};	
         \node[anchor=east] at (-2.2,-2.5) (1abel) {$s$};

         \node[arn_blue] at (0,-2.5) (v6) {};	         
         \node[anchor=west] at (0.2,-2.5) (1abel) {$p$};		
         \node[arn_blue] at (0,-4) (v6_0) {};	
         \node[anchor=west] at (0.2,-4) (1abel) {$0$};
         \node[arn_blue] at (0,-5.5) (v6_1) {};	
         \node[anchor=west] at (0.2,-5.5) (1abel) {$1$};
         \node[arn_blue] at (0,-7) (v6_leaf) {};
         \node[anchor=west] at (0.2,-7) (1abel) {$\bot$};	
         
        \path[draw, dotted,-] 	(0,0) edge (v1);

	\path
    	(v1) edge (v2) 
    	    	(v1) edge (v2_r) 
    	    	(v1) edge (v6) 
    	    	(v6) edge (v6_0)
    	    	(v6_0) edge (v6_1)
    	    	    	    	(v6_1) edge (v6_leaf)  ;   	    	
        \path
        	(v2) edge[dotted,-] (1.5,-4.5) 
        	(1.5,-4.5) edge[dotted,-] (2.5,-4.5) 
        	(2.5,-4.5) edge[dotted,-] (v2);

          \path
        	(v2_r) edge[dotted,-] (-1.5,-4.5) 
        	(-1.5,-4.5) edge[dotted,-] (-2.5,-4.5) 
        	(-2.5,-4.5) edge[dotted,-] (v2_r);

    	\end{scope}      
    	
    	\begin{scope}[xshift=2.5cm,scale=0.45]
  	\tikzset{
  treenode/.style = {align=center, inner sep=0pt, text centered,
    font=\sffamily},  	
  arn_red/.style = {treenode, circle, white, font=\sffamily\bfseries, draw=gray,
    fill=white, text width=0.6em},
  arn_blue/.style = {treenode, circle, black, draw=black,  
    text width=0.6em, thick},
	}    
		 \node[arn_blue] at (0,-1) (v1) {};	
    	 \node[anchor=west] at (0.2,-1) (1abel) {$s_{\Leaf}$};	
    
         \node[arn_blue] at (-2,-2.5) (v2_r) {};	
         \node[anchor=east] at (-2.2,-2.5) (1abel) {$p$};

         \node[arn_blue] at (2,-2.5) (v6) {};	         
         \node[anchor=west] at (2.2,-2.5) (1abel) {$m$};		
         \node[arn_blue] at (-2,-4) (v6_0) {};	
         \node[anchor=east] at (-2.2,-4) (1abel) {$1$};
         \node[arn_blue] at (-2,-5.5) (v6_1) {};	
         \node[anchor=east] at (-2.2,-5.5) (1abel) {$0$};
         \node[arn_blue] at (-2,-7) (v6_leaf) {};
         \node[anchor=east] at (-2.2,-7) (1abel) {$\bot$};

         \node[arn_blue] at (2,-4) (m) {};	         
         \node[anchor=west] at (2.2,-4) (1abel) {$0$};		
         \node[arn_blue] at (2,-5.5) (m1) {};	
         \node[anchor=west] at (2.2,-5.5) (1abel) {$1$};
         \node[arn_blue] at (2,-7) (m2) {};	
         \node[anchor=west] at (2.2,-7) (1abel) {$0$};
         \node[arn_blue] at (2,-8.5) (m_leaf) {};
         \node[anchor=west] at (2.2,-8.5) (1abel) {$\bot$};	
                 \path[draw, dotted,-] 	(0,0) edge (v1);

	\path
    	    	(v1) edge (v2_r) 
    	    	(v1) edge (v6) 
    	    	(v2_r) edge (v6_0)
    	    	(v6_0) edge (v6_1)
    	    	    	    	(v6_1) edge (v6_leaf) 
				(v6) edge (m)    	    	    	    	
				(m) edge[dotted]  (m1) 
    	    	 (m1) edge[dotted] (m2)
    	    	 (m2) edge (m_leaf);
\draw[decorate, decoration={brace}, yshift=2ex,-]  (3.5,-4) -- node[right=0.4ex] {$k$}  (3.5,-7.6);
\draw[decorate, decoration={brace}, yshift=2ex,-]  (1,-6.1) -- node[left=0.4ex] {$i$} (1,-4) ;
 \end{scope}    	
\end{tikzpicture} 
\end{center}
\caption[skeleton nodes]{(a) A skeleton node and its navigation
  gadget, indicating that the node is its parent's left child.
  (b) A skeleton node encoding a leaf of the
  configuration tree. This leaf is its parent's right child.
  It has a tape cell gadget $m$ followed by $k$ digits, the
  $i$-th of which is labeled with $1$ iff the tape cell's
  inscription is represented by the number $i$.}\label{Fig:skeletonNode}
\end{figure}
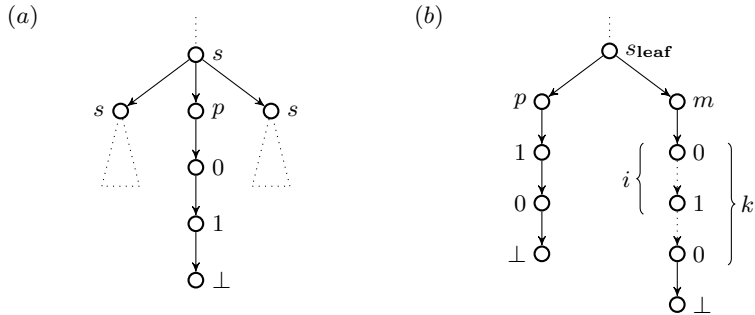
        
 We fix an arbitrary order on the
 children of nodes in $T_{\Aut{A}_w}$, such that every universal node
 has a left child and a right
 child.
The \emph{encoding} $T\deff \enc(T_{\Aut{A}_w})$ is the ranked
$\Sigma'$-labeled unordered tree obtained from $T_{\Aut{A}_w}$ by
 replacing every node $v$ labeled $w_1qw_2$ with a
 $\Sigma'$-labeled ranked tree
 $\enc(t_v)$, as follows: 
\begin{itemize}
\item if $v$ is universal, then the root of $\enc(t_v)$ is labeled with $\CTA$,
\item if $v$ is existential, and $v$ is the root of  $T_{\Aut{A}_w}$  or $v$ is the left child of a universal node, then the root of $\enc(t_v)$ is labeled with $\CTEleft$,
\item if $v$ is existential, and $v$ is the right child of
  a universal node, then the root of $\enc(t_v)$ is labeled with
  $\CTEright$, 
\item exactly one child of the root of $\enc(t_v)$ is labeled by $r$ (this will be
  the root of the subtree that encodes the configuration at $v$), and 
\item for each child $u$ of $v$ in $T_{\Aut{A}_w}$,
  $\enc(t_v)$ has a subtree $\enc(t_{u})$, which is the encoded
  subtree of $T_{\Aut{A}_w}$ obtained by the replacement of
  $u$.
\end{itemize}
\noindent 
The subtree $\gamma_r$ rooted at the $r$-labeled child of the root of $\enc(t_v)$,
encodes the configuration $c:= w_1qw_2$ represented by node $v$ in
$T_{\Aut{A}_w}$.
Since $\Aut{A}$ is exponential-space bounded, the tape inscription
of $c$ has length $\leq 2^n$. For representing $c$, we
use a full binary ordered tree
of height $n$. The path from the root to a leaf specifies the address
of the tape cell represented by the leaf, and the leaf carries
information on the tape cell's inscription and, in case that the
tape cell is the current head position, also information on the
current state; all this information is encoded by a suitable
\emph{tape cell gadget} that is attached to the ``leaf''. 
The number $k$ of possible tape cell inscriptions (enriched
with information on the current state) is \emph{polynomial} in $\size{\Aut{A}_w}$.
The nodes of the ``full binary tree'' are
called \emph{skeleton nodes} and are labeled $s$.
To ensure that the desired query $Q$ can be constructed in \emph{polynomial}
time, we attach to each skeleton node a \emph{navigation gadget} \cite{BjorklundMS08}, which is a
path of length~4. To indicate that a node is a
left (resp., right) child, this gadget is labeled $p - 0 - 1 - \bot$ (resp.,
$p - 1 - 0 - \bot$). See Figure~\ref{Fig:skeletonNode} for an
illustration of the \emph{navigation gadget} and the \emph{tape cell gadget}. 
    
Given an ATM $\Aut{A}$ and a word $w\in\Sigma^*$, we construct 
in polynomial time an $\mDatalog(\tau_{u,\Sigma'}^{\Desc})$-query $Q=(\PP,\Ans)$ such that 
\,$Q_{\textit{Bool}} \neq \emptyquery$\, iff \,there is an
accepting computation tree for $\Aut{A}_w$ on $\emptyword$, i.e., \,$w \in L(\Aut{A})$. 
The query $Q$ consists of two parts, one to
verify that the structure of the input tree represents an encoded
computation tree, and the other to verify consistency with the ATM's 
transition relation.
Details can be found in the appendix.
The particular choice of the navigation gadgets ensures that
$Q$ can be constructed in time \emph{polynomial} in the size of
$\Aut{A}$ and $w$. The only point where we make essential use of the $\Desc$-predicate is during the comparison of the cells by  using the navigation gadgets.

\section{Final Remarks}\label{section:relatedResults}

Along with the upper bound provided by Theorem~\ref{thm:AMW14}, and 
since $\tauu^{\Desc}\subseteq \tauo^{\Child,\Desc}$,
Theorem~\ref{Thm:Desc_QCP_unranked_unordered} implies the following
corollary, which summarizes our main results.

\begin{Corollary}\label{cor:main}
 The QCP is $\TwoEXPTIME$-complete for 
 Boolean $\mDatalog(\tauu^{\Desc})$ on finite labeled unordered trees, and for
 Boolean $\mDatalog(\tauo^{\Child,\Desc})$ on finite labeled ordered trees.
\end{Corollary}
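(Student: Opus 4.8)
The plan is to derive the Corollary by pairing the two-fold exponential upper bound of Theorem~\ref{thm:AMW14} with the matching lower bounds, the unordered one coming straight from Theorem~\ref{Thm:Desc_QCP_unranked_unordered} and the ordered one obtained from it by an order-invariance argument. First I would settle membership in $\TwoEXPTIME$. Using the schema inclusions $\tauu^{\Desc}\subseteq\tauu^{\Root,\Leaf,\Desc}$ and $\tauo^{\Child,\Desc}\subseteq\tauGK^{\Child,\Desc}$ (the latter since $\tauGK=\tauo^{\Root,\Leaf,\Ls}$), every Boolean $\mDatalog(\tauu^{\Desc})$-query is in particular a Boolean $\mDatalog(\tauu^{\Root,\Leaf,\Desc})$-query, and every Boolean $\mDatalog(\tauo^{\Child,\Desc})$-query is a Boolean $\mDatalog(\tauGK^{\Child,\Desc})$-query. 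Hence the algorithm of Theorem~\ref{thm:AMW14} decides both containment problems named in the Corollary within two-fold exponential time.

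For the lower bounds, the unordered case is Theorem~\ref{Thm:Desc_QCP_unranked_unordered} verbatim, so the task reduces to transferring $\TwoEXPTIME$-hardness to ordered trees. Here I would exploit that $\tauu^{\Desc}\subseteq\tauo^{\Child,\Desc}$: any pair of Boolean $\mDatalog(\tauu^{\Desc})$-queries $(Q_1,Q_2)$ may be read unchanged as Boolean $\mDatalog(\tauo^{\Child,\Desc})$-queries, and these mention only the predicates $\Child$, $\Desc$, and $\Label_\alpha$. For every ordered $\Sigma$-labeled tree $T'$ with underlying unordered tree $T$, the $\set{\Child,\Desc}\cup\setc{\Label_\alpha}{\alpha\in\Sigma}$-reducts of $\S_o(T')$ and $\S_u(T)$ coincide, so $Q_{i,\textit{Bool}}(T')=Q_{i,\textit{Bool}}(T)$ for $i\in\set{1,2}$. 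Since every unordered tree is the reduct of some ordered tree and conversely, a tree separating $Q_1$ from $Q_2$ exists on the ordered side if and only if one exists on the unordered side; therefore $Q_1\subseteq Q_2$ over ordered trees if and only if $Q_1\subseteq Q_2$ over unordered trees.

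Consequently, the identity map on query pairs is a polynomial-time reduction from the QCP for Boolean $\mDatalog(\tauu^{\Desc})$ on unordered trees to the QCP for Boolean $\mDatalog(\tauo^{\Child,\Desc})$ on ordered trees, and the hardness of Theorem~\ref{Thm:Desc_QCP_unranked_unordered} carries over; combined with the upper bounds this yields $\TwoEXPTIME$-completeness in both settings. I expect no real difficulty here: the argument is essentially bookkeeping, and the only step deserving care is the order-invariance observation, namely that confining attention to queries over $\set{\Child,\Desc}\cup\setc{\Label_\alpha}{\alpha\in\Sigma}$ genuinely collapses the ordered and unordered containment relations, which rests on the fact that the order predicates $\Fc,\Ns,\Ls$ never occur in the queries.
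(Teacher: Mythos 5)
Your proposal is correct and follows essentially the same route as the paper: the upper bounds come from Theorem~\ref{thm:AMW14} via the schema inclusions, the unordered lower bound is Theorem~\ref{Thm:Desc_QCP_unranked_unordered} verbatim, and the ordered lower bound is transferred using $\tauu^{\Desc}\subseteq\tauo^{\Child,\Desc}$ --- the paper compresses this last step into a single clause, while you correctly spell out the underlying order-invariance argument that justifies it.
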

By applying standard reductions, the $\TwoEXPTIME$-completeness results
of Corollary~\ref{cor:main} carry over from the QCP to the
\emph{equivalence problem}.
When restricting attention to \emph{ranked} trees over a ranked finite
alphabet, the $\TwoEXPTIME$-completeness results also carry over to the \emph{emptiness problem}.
For \emph{unranked} labeled trees, the emptiness problem for
$\mDatalog(\tauo^{\Child,\Desc})$ is in $\TwoEXPTIME$, but we
currently do not have a matching $\TwoEXPTIME$-hardness result.

\setlength{\belowcaptionskip}{0pt} 
\setlength{\intextsep}{2pt} 
An overview of the currently known results is given in Table~\ref{tbl:results}; for further information and detailed proofs we refer to \cite{frochaux}.
\begin{table}
\caption{Complexity of monadic datalog on finite labeled trees; $N \subseteq \set{\Root,\Leaf}$ and $M \subseteq \set{\Root,\Leaf,\Ls,\Child}$; ``c'' (``h'') means ``complete'' (``hard'').} \label{tbl:results}
\centering
\begin{tabular}{|l|c|c|c|c|c|c|}
\hline	
 & $\tauu^N$	& $\tauo^M$  & $\tauu^{N\cup\set{\Desc}}$  &  $\tauo^{M\cup\set{\Child,\Desc}}$ &  $\tauGK^{\Child,\Desc}$ & \\
\hline \hline
& \multicolumn{2}{c|}{} & \multicolumn{2}{c|}{\EXPTIME-h {\scriptsize \&} in \TwoEXPTIME}  &&  \textit{unranked} \\ \cline{4-5}
\raisebox{1.5ex}[-1.5ex]{Emptiness}  & \multicolumn{2}{c|}{\raisebox{1.5ex}[-1.5ex]{\EXPTIME-c}} &\multicolumn{3}{c|}{\TwoEXPTIME-c} & \textit{ranked} \\
\hline
 &  \multicolumn{2}{c|}{} & \multicolumn{3}{c|}{} & \textit{unranked}\\
\raisebox{1.5ex}[-1.5ex]{Equivalence} & \multicolumn{2}{c|}{\raisebox{1.5ex}[-1.5ex]{\EXPTIME-c}} & \multicolumn{3}{c|}{\raisebox{1.5ex}[-1.5ex]{\TwoEXPTIME-c}} & \textit{ranked} \\
\hline
 & \multicolumn{2}{c|}{} & \multicolumn{3}{c|}{} & \textit{unranked}\\
\raisebox{1.5ex}[-1.5ex]{Containment} & \multicolumn{2}{c|}{\raisebox{1.5ex}[-1.5ex]{\EXPTIME-c}}  & \multicolumn{3}{c|}{\raisebox{1.5ex}[-1.5ex]{\TwoEXPTIME-c}} &\textit{ranked} \\
\hline
\end{tabular}
\end{table}

\clearpage

\bibliographystyle{splncs03}
\bibliography{MonadicDatalogContainmentOnTreesUsingDesc}

\appendix
\newcounter{restoreAppTheorem}
\section*{APPENDIX}

  \section{Hardness on Ranked  Trees}

\noindent{\hypersetup{pdfborderstyle={/S/U/W 0}}\textbf{Theorem \ref{Thm:Desc_Empti_ranked_ordered}  \textit{(restated)}} \emph{ There is a ranked finite alphabet $\Sigma'$, such that 
 the emptiness problem for Boolean $\mDatalog(\tauuSigmaStrich^{\Desc})$ on
 finite unordered ranked $\Sigma'$-labeled trees is \TwoEXPTIME-hard.} 
\\ }

The proof idea of Theorem~\ref{Thm:Desc_Empti_ranked_ordered} is based on the proof of the Theorem~\ref{Thm:Validit} presented in the full version of the MFCS publication by Björklund, Martens, and Schwentick \cite{BjorklundMS08}. The used alternating Turing machine was introduced at FOCS'76 by Chandra and Stockmeyer, as well as by Kozen, 
and presented in a  joint journal publication in 1981\cite{DBLP:journals/jacm/ChandraKS81}.

An \emph{alternating Turing machine} (ATM, for short) $ \Aut{A}=(Q,\Sigma,\Gamma,\delta,q_0)$ consists of \begin{itemize}
\item a finite set of states $Q$ partitioned into \emph{universal states} $Q_{\forall}$, \emph{existential states} $Q_{\exists}$, an \emph{accepting} state $q_a$, and a \emph{rejecting state} $q_r$,
\item the finite input alphabet $\Sigma$,
\item the finite tape alphabet $\Gamma \supset \Sigma$, that contains the special \emph{blank} symbol $\Blank$,
\item the initial (or, starting) state $q_0$ and
\item the transition relation $\delta \subseteq ((Q \times \Gamma) \times (Q \times \Gamma \times \set{L,R,S}))$.
 \end{itemize} As usual the letters $L$, $R$, and $S$ denote the directions \textit{left}, \textit{right}, and \textit{stay} in which the head on the tape is moved.
 
 A \emph{configuration} $c$ of $\Aut{A}$ is given by specifying its state, the content of its tape together with the position of the tape head. Thus, we interpret a string of the form $w_1qw_2$ with $w_1, w_2 \in \Gamma^*$, $q \in Q$ as the configuration in which the tape contains the word $w_1w_2$, followed by blanks, the head's tape  position is the first letter of $w_2$, and $q$ is the current state of the machine. A transition rule $((q,a),(q',b,D)) \in \delta$ denotes a step of $\Aut{A}$ by reading in state $q$ the letter $a \in \Gamma$, overwriting $a$ on the current head position by $b \in \Gamma$, moving the head depending on $D \in \set{L,R,S}$ one position to the left, to right, or stay, and finally, switching to state $q'$. A configuration $c'$ obtained by applying a  rule of $\delta$ to a given configuration $c$ is called \emph{successor configuration} of $c$. The configuration $w_1bq'w_2$, for example, is a successor configuration of $w_1qaw_2$ obtained by appling the transition rule  $((q,a),(q',b,R))$. A configuration $w_1qw_2$ is a \emph{halting configuration} if $q$ is either the accepting state $q_a$ or the rejecting state $q_r$. Without loss of generality, we can assume that there is no successor configuration of any halting configuration, and furthermore, before halting, the automaton moves its head to the left on the first non-blank symbol on the tape, so  each halting configuration is of the form $qw$. 

A \emph{computation tree} $T_{\Aut{A}}$ of the ATM $\Aut{A}$ on input $w \in \Sigma^*$ is a tree labeled with configurations of $\Aut{A}$, such that the root of $T_{\Aut{A}}$ is labeled by $q_0w$, and for each node $u$ of $T_{\Aut{A}}$ labeled by $w_1qw_2$,
\begin{itemize}
\item if $q \in Q_\exists$, then $u$ has exactly one child, and this child is labeled with a successor configuration of $w_1qw_2$,
\item if $q \in Q_\forall$, then $u$ has a child $v$ for every successor configuration $w_1'q'w_2'$ of $w_1qw_2$, and $v$ is labeled by $w_1'q'w_2'$, 
\item if $q \in \set{q_a,q_r}$, then $u$ is a leaf of $T_{\Aut{A}}$.
\end{itemize}
Observe, that $T_\Aut{A}$ can be infinite, since $\Aut{A}$ may have non-halting computation branches. A computation tree is \emph{accepting} if all its branches are finite and all its leaves are labeled by configurations in state $q_a$. As usually, the language $L(\Aut{A})$ of the ATM \Aut{A} is the set of words $w \in \Sigma^*$ for which there exists an accepting computation tree of \Aut{A} on $w$.

We say that an ATM is \emph{normalized} if every non-halting configuration has precisely two successor configurations, each universal step only affects the state of the machine, and additionally, the machine always proceeds from an universal state to an existential state, and vice versa. It is easy to verify that for every alternating Turing machine $\Aut{A}$ there exists a normalized alternating Turing machine $\Aut{A}_n$ with $L(\Aut{A}) =  L(\Aut{A}_n)$, and $\Aut{A}_n$ can be constructed from  $\Aut{A}$ within polynomial time.

Now, we are ready to prove Theorem~\ref{Thm:Desc_Empti_ranked_ordered}.

\begin{proofof}{Theorem~\ref{Thm:Desc_Empti_ranked_ordered}}
Our proof proceeds by a reduction from the word problem for exponential space bounded ATM $\Aut{A}$. In this problem, the input consists on an exponential space bounded ATM $\Aut{A}$, and an input word $w$ for $\Aut{A}$, and the task is to decide if $w \in L(\Aut{A})$. In \cite{DBLP:journals/jacm/ChandraKS81} this problem was shown to be $\TwoEXPTIME$-complete.  

Our reduction will be done from an ATM with empty input. Therefore, we construct for the given ATM $\Aut{A}$ and the given word $w$ an ATM $\Aut{A}_w$ that works in space exponential in the size $w$ and accepts the empty input word if and only if $\Aut{A}$ accepts $w$. To do this, we let $\Aut{A}_w$ start by writing $w$ on the empty tape, afterwards $\Aut{A}_w$ returns to the leftmost tape position and finally, it starts to simulate the original machine $\Aut{A}$. W.l.o.g., we can assume that $\Aut{A}_w$ is normalized and since the computation is exponentially space bounded, the non-blank portion of the tape during the computation of $\Aut{A}_w$ is never longer that $2^n$, where $n$ is polynomial in the size $|w|$ of the original input word.

We will choose a suitable ranked alphabet $\Sigma'$, independent from $\Aut{A}_w$. Within polynomial time, we construct an $\mDatalog(\tau_{u,\Sigma'}^{\Desc})$-query $Q=(\PP,\Ans)$ such that \begin{align*} Q \neq \emptyset \qquad &\iff \qquad \text{there is an accepting computation tree for $\Aut{A}_w$}\\
 &\iff \qquad w \in L(\Aut{A}). \end{align*}
 
Since $\TwoEXPTIME$ is closed under complement, it implies that the emptiness problem for Boolean $\mDatalog(\tauu^{\Desc})$ on ranked unordered labeled trees is hard for \TwoEXPTIME. 
 
In the next paragraphs, we present the encoding of the computation tree that is basically taken from \cite{BjorklundMS08} and includes the encoding of the configuration tree, both are adapted to our problem.  So, let $T_{\Aut{A}_w}$ be a computation tree of $\Aut{A}_w=(Q,\Sigma,\Gamma,\delta,q_0)$ (cf. Figure~\ref{Fig:computationTree}), we fix some arbitrary order of the children of each universal node  such that every universal node has a left and a right child.\footnote{A  node $u$ is an universal node if it is labeled by a configuration $w_1qw_2$ where $q$ is an universal state. If $q$ is an existential state then $u$ is existential.} Now the encoding $T \deff \enc(T_{\Aut{A}_w})$ can be obtained from $T_{\Aut{A}_w}$ by replacing every node $v$ labeled by $w_1qw_2$ with a tree $\enc(t_v)$, as follows \begin{itemize}
\item if $v$ is universal, then the root of $\enc(t_v)$ is labeled with $\CTA$,
\item if $v$ is existential, and $v$ is the root of  $T_{\Aut{A}_w}$  or $v$ is the left child of a universal node, then the root of $\enc(t_v)$ is labeled with $\CTEleft$,
\item if $v$ is existential, and $v$ is the right child of
  a universal node, then the root of $\enc(t_v)$ is labeled with
  $\CTEright$, 
\item exactly one child of the root of $\enc(t_v)$ is labeled by $r$ (this will be
  the root of the subtree that encodes the configuration at $v$), and 
\item for each child $u_i$ of $v$ in $T_{\Aut{A}_w}$, $\enc(t_v)$ has a subtree $\enc(t_{u_i})$, which is the encoded subtree of $T_{\Aut{A}_w}$ obtained by the replacement of $u_i$.\footnote{In fact, for a non halting configuration there is exactly one child if $v$ is existential or otherwise, if $v$ is universal, there are exactly two children since $\Aut{A}_w$ is normalized.}
\end{itemize}

\noindent The  set of subtrees denoted by their root label $r$ encode the configurations that is originally labeled the computation tree. We have to navigate through $2^n$ tape cells and we must be able to compare the $i$-th cell of one configuration with the $i$-th cell of the predecessor configuration. Thus, the configuration tree is basically a binary tree of height $n$ that has $2^n$ leaves to carry the information for the tape cells, together with the information of the current state of the machine and the position of the head. This sequence of $2^n$  \emph{configuration cells} will carry the whole information about the configuration of the machine in this working step. To this end, the set of configuration cells is partitioned into three types.
\begin{itemize}
\item The set \BCells\ of \emph{basic cells} is equal to $\Gamma$. A basic cell represents a tape cell that is not currently visited by the head and also is not visited in the predecessor configuration.
\item The set \CCells\ of \emph{current tape head cells} is equal to $\Gamma \times \delta$. The letter from $\Gamma$ represents the tape content in the actual position that is currently visited by the head, while the transition from $\delta$ is the transition which leads to the actual configuration.
\item The set \PCells\ of \emph{previous tape head cells} is equal to $\Gamma \times (Q \times \Gamma)$ and represent tape cells that were visited by the head in the predecessor configuration, but not in the current one. The first letter from $\Gamma$ represents the actual content on the tape in this cell and the pair $(Q \times \Gamma)$ the previous state and tape content in the predecessor configuration.
\end{itemize}
Observe, the number $k$ of all possible configuration cells for $\Aut{A}_w$ is polynomial in the size of the automaton and so we can refer to each possible  configuration cell a natural number $i$ in $\set{1,\ldots,k}$.

Now, it is necessary to fix a set of constraints, that allows to decide whenever a sequence $C_1$ of $2^n$ configuration cells is a valid successor configuration of another sequence $C_0$. We start with constraints to ensure a degree of consistency inside a given sequence. The set $H(\Aut{A}_w)$ of horizontal constraints consists of the following rules:
\begin{enumerate}[(H1)]
\item The only cell allowed to the left of a cell $(a,((q_1,b),(q_2,c,R))) \in \CCells $ is the cell $(c,(q_1,b)) \in \PCells$.
\item The only cell allowed to the right of a cell $(a,((q_1,b),(q_2,c,L))) \in \CCells $ is the cell $(c,(q_1,b)) \in \PCells$.
\item The only cell allowed to the right of the basic cell $\Blank \in \Gamma$ is $\Blank$ itself.
\end{enumerate}
To fix the set $V(\Aut{A}_w)$ of vertical constraints between two  consecutive sequences $C_0$ and $C_1$, we imagine the predecessor is lying cell by cell on top of its successor such that the $i$-th configuration cell of $C_0$ is lying on top of the $i$-th cell of $C_1$.
\begin{enumerate}[(V1)]
\item If the $i$-th cell is a BCell $a \in \Gamma$ then the only allowed cells on the $i$-th tape position in a successor configuration are $a$ itself and any CCell $(a,((q_1,b),(q_2,c,m))$ where $m \in \set{L,R}$. The latter is the case that the automaton $\Aut{A}_w$ just moved to this cell, coming from the left or the right. The letter on this position is currently untouched, but the letter in the left (right) neighbor is overwritten if $b\neq c$ and $m=L$ ($m=R$).
\item   If the $i$-th cell is a CCell $(a,((q_1,b),(q_2,c,m)))$ then the only allowed cells on the $i$-th tape position in a successor configuration are any $(d,(q_2,a)) \in \PCells$ and any $(d,((q_2,a),(q_3,d,m'))) \in \CCells$ where $m'=S$.
\item  If the $i$-th cell is a PCell $(a,(q,b))$ then the only allowed cells on the $i$-th tape position in a successor configuration are the BCell $a$ and any CCell $(a,((q_1,b),(q_2,c,m))$ where $m \in \set{L,R}$.
\end{enumerate}
Figure~\ref{Fig:configATM} illustrates an example of valid transitions respecting this constraints. It is easy to verify that if $C_0$ is a valid encoding of a configuration, $C_1$ is a valid encoding of a successor configuration if and only if all horizontal and vertical conditions are satisfied.
\begin{figure}[h]
\begin{center}
  \begin{tikzpicture}

    \begin{scope}[scale=0.7]

		\node[anchor=  west] at (-1.8,2.5) {$C_{x}$};
         \path[draw] 	(0,0) edge (-1,0); 
                  \path[draw] 	(0,2) edge (-1,2); 
                     \path[draw] 	(13,0) edge (12,0); 
                  \path[draw] 	(13,2) edge (12,2); 
                  		\node at (13.1,1) {$\ldots$};
                  		\node at (-1.1,1) {$\ldots$};
        \draw
			(0, 0) rectangle (4, 2);
		\node[anchor= north west] at (0,2) {BCell};
		\node at (2,1) {$e$};

        \draw
			(4, 0) rectangle (8, 2);
		\node[anchor= north west] at (4,2) {CCell};
		\node at (6,1) {$b$};
		\node[anchor= south] at (6,0) {$((q,c),(q',f,L))$};

        \draw
			(8, 0) rectangle (12, 2);
		\node[anchor= north west] at (8,2) {PCell};
		\node at (10,1) {$f$};
		\node[anchor= south] at (10,0) {$(q,c)$};

   \begin{scope}[yshift=-3.5cm]

		\node[anchor=  west] at (-1.8,2.5) {$C_{x+1}$};
         \path[draw] 	(0,0) edge (-1,0); 
                  \path[draw] 	(0,2) edge (-1,2); 
                     \path[draw] 	(13,0) edge (12,0); 
                  \path[draw] 	(13,2) edge (12,2); 
                  		\node at (13.1,1) {$\ldots$};
                  		\node at (-1.1,1) {$\ldots$};
        \draw
			(0, 0) rectangle (4, 2);
		\node[anchor= north west] at (0,2) {BCell};
		\node at (2,1) {$e$};

        \draw
			(4, 0) rectangle (8, 2);
		\node[anchor= north west] at (4,2) {CCell};
		\node at (6,1) {$a$};
		\node[anchor= south] at (6,0) {$((q',b),(q'',a,S))$};
		
        \draw
			(8, 0) rectangle (12, 2);
		\node[anchor= north west] at (8,2) {BCell};
		\node at (10,1) {$f$};

    	\end{scope}

    	\end{scope}

\end{tikzpicture} 
\end{center}
\caption[Example of a sequence of configurations]{This example shows the corresponding parts of a valid configuration $C_{x}$ and its successor configuration $C_{x+1}$. The previous transition $((q,c),(q',f,L))$ leading to configuraion $C_x$ was reading a $c \in \Gamma$ on the right cell, writing an $f \in \Gamma$, switching the state from $q$ to $q'$, and finally moving the head one position to the left. The changeover from $C_x$ to $C_{x+1}$ was done by using transition $((q',b),(q'',a,S))$, saying reading in state $q'$ the letter $b$, write the letter $a$, switch to state $q''$, and stay with the head at the current position.}\label{Fig:configATM}
\end{figure}
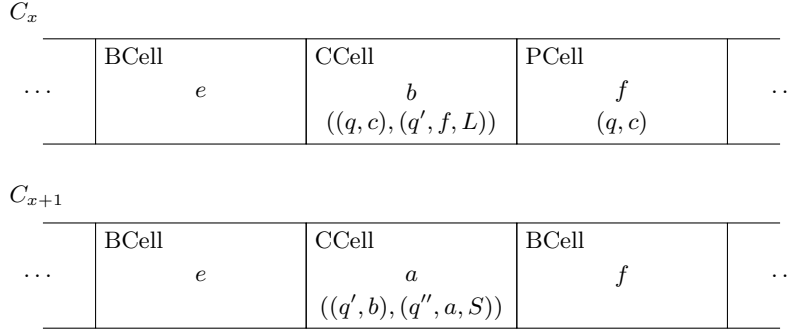

Now, we are ready to describe the structure of the $r$-rootet subtrees that encode the configuration; that is the last remaining part of the whole encoding. We already noted that these configuration trees are based on binary trees of height $n$. Every non root node carries the label $s$ and Björklund et al. called them \emph{skeleton nodes}.  Every skeleton node has an attached \emph{navigation gadget}, that is a short path of four nodes labeled by $p$, $0$, $1$, $\bot$ for denoting any children as left children and labeled by $p$, $1$, $0$, $\bot$ for right children in the sequence from the skeleton node to the leaf of the gadget (cf. Figure~\ref{Fig:skeletonNode} (a)).

Each leaf skeleton node, that is a skeleton node that has no skeleton node as child, carries besides the navigation gadget, a \emph{configuration cell gadget} that consists of a path of length $k+2$.\footnote{Recall, $k$ is the number of all possible configuration cells of $\Aut{A}_w$.} The root node of this path is labeled by $m$ (for \emph{me}) followed by $k$ nodes labeled with digits $0$ and $1$, and the path ends in a leaf labeled with $\bot$. $k-1$ nodes on this path are labeled with $0$, only the $i$-th node is labeled by $1$, telling the current cell is the cell number $i$. 

To finish the description of the encoding, for technical reasons, we start in the top of the computation tree with a node labeled with $\top$ that has exactly one child, the topmost configuration node. Now, we are ready to define the ranked alphabet $\Sigma'$ and afterwards, to construct the query. The alphabet consists of the following symbols:
\begin{quote}
\begin{description}
	\item[$\top$] of arity $ar(\top)=1$, that denotes the root node of the encoded computation tree.
   \item[$\CTA$] of arity $\ar(\CTA)=3$, that denotes a universal configuration.
   \item[$\LeafCTA$] of arity $\ar(\LeafCTA)=1$, that denotes a  halting configuration, that is a child of an existential configuration.\footnote{To be precise, a halting configuration is neither an existential nor a universal configuration, but the labels tell us whose configuration child it is.}
   \item[$\CTEleft$] of arity  $\ar(\CTEleft)=2$, that denotes  an existential configuration  where the configuration itself is the left child of a universal configuration (or the initial configuration).
      \item[$\CTEright$] of arity $\ar(\CTEright)=2$, that denotes an existential configuration  where the configuration itself is the right child of a universal configuration.
   \item[$\LeafCTEleft$] of arity  $\ar(\LeafCTEleft)=1$, that denotes a halting configuration  where the configuration itself is the left child of a universal  configuration (or the initial configuration).
    \item[$\LeafCTEright$] of arity $\ar(\LeafCTEright)=1$, that denotes a halting configuration  where the configuration itself is the right child of a universal  configuration.
   \item[$r$] of arity $\ar(r)=2$, that denotes the root node of an configuration tree.
    \item[$s$] of arity  $\ar(s)=3$, that denotes a skeleton node of an configuration tree.
      \item[$s_{\Leaf}$] of arity  $\ar(s_{\Leaf})=2$, that denotes a skeleton leaf node that is a leaf of the configuration tree.
   \item[$p$] of arity $\ar(p)=1$, that denotes the root of an navigation gadget.
   \item[$m$] of arity $\ar(m)=1$, that denotes the root of an cell gadget 'me'.
    \item[$0$ and $1$] of arity $\ar(0)=\ar(1)=1$, for the values of the gadgets.
    \item[$\bot$] the only symbol of $\Sigma_T$ of arity $\ar(\bot)=0 $. So every leaf of the encoding tree is labeled by $\bot$.
\end{description}
\end{quote}
The construction of the demanded query $Q=(\PP,Ans) $ starts with a program $\PP_1$ that ensures the newly introduced $\idb$ predicate $\structure$ for the root if the input tree $T$ is structured as an encoded computation tree. In particular, the input tree must fulfill the following conditions.
\begin{enumerate}[(1)]
\item \label{cond:kTOP}The root of the tree is labeled with $\top$ and has exactly one child that represents the initial configuration.
\item \label{cond:CT-r}Each configuration node has exactly one child labeled with $r$.
\item \label{cond:vGadget} Every configuration cell gadget correctly encodes a configuration cell.
\item Each encoded configuration tree is complete and has height $n$.
\item \label{cond:nGadget} Every skeleton node has exactly one correctly assigned navigation gadget.
\item \label{cond:H} All horizontal constraints from $H(\Aut{A}_w)$ are satisfied.
\item The universal and existential configurations must alternate on the subtree of $\CT$ labeled nodes.
\item \label{Cond:diffCCells}For each non halting universal configuration, the two child configuration nodes represent two encoded configuration trees with two different CCells.
\item \label{Cond:startCT}The highest encoded configuration tree has the start configuration cell \[(\Blank, ((q_0,\Blank),(q_0,\Blank,S)))\] as its leftmost configuration cell.
Recall, $q_0$ is the initial state of $\Aut{A}_w$ and the computation starts on an empty tape.
\item \label{cond:CTleafAcc}Every configuration node that has no successor configuration encodes a final configuration, that implies the leftmost configuration cell is of the form \[ (a,((q,b),(q_a,c,m))).\] Recall $q_a$ is the accepting state of the machine, the machine, upon accepting, moves its head to the leftmost tape cell, and finally, an input tree is accepted if every path in the computation tree leads to an accepting halting configuration.
\end{enumerate}
The program $\PP_1$ will start in the leaves of the encoded tree and verifies the structure step by step in the direction to the root node. For the beginning, the program $\PP_1$ is the empty set of rules and the first rule we add is  to call leaves by what they are. Thus, we add 
\[ \Leaf(x) \leftarrow \Label_{\bot}(x).\]
In any case, a leaf belongs to a gadget, that is a cell configuration or a navigation gadget, and therefore we count the length of the digit path up to the length of $k$ by the following rules.
			\begin{align*}
				0(x) & \leftarrow \Label_{0}(x)\\
				1(x) & \leftarrow \Label_{1}(x)\\
				\digit(x) & \leftarrow 0(x)\\
				\digit(x) & \leftarrow 1(x)\\
				\digit_0(x) & \leftarrow \Leaf(x) \\
				\digit_1(x) & \leftarrow \digit(x),\Child(x,y),\digit_0(y) \\
				\digit_2(x) & \leftarrow \digit(x),\Child(x,y),\digit_1(y) \\
				\ldots \quad \\
				\digit_k(x) & \leftarrow \digit(x),\Child(x,y),\digit_{k-1}(y) 
			\end{align*}		
Additionally, to ensure that a navigation gadget and the 'me' 
 cell configuration gadget have exactly one node labeled with $1$, we count the amount of $1$-labeled nodes on every digit path by the following rules.
			\begin{align*}
				\Ccount^1_{<1}(x) & \leftarrow \Leaf(x) \\
				\Ccount^1_{<1}(x) & \leftarrow  0(x),\Child(x,y),\Ccount^1_{<1}(y) \\
				\Ccount^1_{=1}(x) & \leftarrow  1(x),\Child(x,y),\Ccount^1_{<1}(y) \\
				\Ccount^1_{=1}(x) & \leftarrow  0(x),\Child(x,y),\Ccount^1_{=1}(y) \\
			\end{align*}		
We propagate this counting results to the gadget roots if they are labeled by $m$ or $p$ by adding the following rules to $\PP_1$:
			\begin{align*}
				\Ccount^1_{=1}(x) & \leftarrow  \Label_{p}(x),\Child(x,y),\Ccount^1_{=1}(y) \\
				\digit_2(x) & \leftarrow  \Label_{p}(x),\Child(x,y),\digit_2(x) \\
				p(x)& \leftarrow  \Label_{p}(x), \Ccount^1_{=1}(x),  \digit_2(x) \\
				\Ccount^1_{=1}(x) & \leftarrow  \Label_{m}(x),\Child(x,y),\Ccount^1_{=1}(y) \\
				\digit_k(x) & \leftarrow  \Label_{m}(x),\Child(x,y),\digit_k(x) \\
				m(x)& \leftarrow   \Label_{m}(x), \Ccount^1_{=1}(x),  \digit_k(x) 
			\end{align*}	
Now, the predicate $p$  becomes true for a node $v$ of the input tree $T$ if it is labeled with $p$ and it is the starting node of a navigation gadget that actually denotes a direction, as well as, $m$ becomes true for a node $v$ of the input tree $T$ if it is labeled with $m$ and it is the starting node  of  a 'me'  cell configuration gadget that actually denotes a configuration cell.

For the rest of the section, we introduce a predicate $\Child^i(x,y)$ for a natural number $i$ as short hand for the set of atoms \[ \Child(x,x_1),\Child(x_1,x_2), \ldots ,\Child(x_{i-1},y) \] where $\Child^i(x,y)$ states  the fact that $y$ is a descendant of $x$ in the $i$-th generation.

By the following rules, every $m$-marked node knows which configuration $i \in \set{1,\ldots k}$ it encodes.
\begin{align*}
	m_{k=1}(x) & \leftarrow m(x),\Child(x,x_1), 1(x_1) \\
	m_{k=2}(x) & \leftarrow m(x),\Child^2(x,x_2), 1(x_2) \\
		\vdots \quad & \\
		m_{k=i}(x) & \leftarrow m(x),\Child^i(x,x_i), 1(x_i) \\
		\vdots \quad & \\
	m_{k=k}(x) & \leftarrow m(x),\Child^k(x,x_k), 1(x_k) \\
\end{align*}
	
	Now, we mark the leaves of the skeleton nodes with the $\idb$ predicate $s_{\Leaf}$ that are leaves in the configuration tree considered without the gadgets.
\[
		s_{\Leaf}(x) \leftarrow \Label_{s_{\Leaf}}(x), \Child(x,x_m), m(x_m), \Child(x,x_p), p(x_p) 
	\] Observe, the label $s_{\Leaf}$ has arity two, so there cannot be  further children the rule could work on. Now, for the subtrees rooted by nodes marked with $s_{\Leaf}$ the condition (\ref{cond:vGadget}) is fulfilled. 
	By the next rules, we mark the nodes carrying the label $s$ or  $s_{\Leaf}$ regarding their navigation gadget as left child using $s_L$ or as right child by using $s_R$. Remember a correct navigation gadget is marked by the $\idb$ predicate $p$.
	\begin{align*}
		s_L(x) & \leftarrow \Label_{s_{\Leaf}}(x), \Child(x,x_p), p(x_p), \Child(x_p,x_n), 0(x_n) \\
		s_R(x) & \leftarrow \Label_{s_{\Leaf}}(x), \Child(x,x_p), p(x_p), \Child(x_p,x_n), 1(x_n) \\
		s_L(x) & \leftarrow \Label_s(x), \Child(x,x_p), p(x_p), \Child(x_p,x_n), 0(x_n) \\
		s_R(x) & \leftarrow \Label_s(x), \Child(x,x_p), p(x_p), \Child(x_p,x_n), 1(x_n) 
	\end{align*}
	We are going to mark the entire configuration tree with the predicate $s$, that affects the nodes marked by $s_{\Leaf}$ and every node labeled by $s$ that have a correct navigation gadget, as well as  left and  right children.
		\begin{align*}
			s(x) & \leftarrow s_{\Leaf}(x)\\
			s(x) & \leftarrow s_L(x), \Child(x,x_l),s_L(x_l),s(x_l),\Child(x,x_r),s_R(x_r),s(x_r)\\
			s(x) & \leftarrow s_R(x), \Child(x,x_l),s_L(x_l),s(x_l),\Child(x,x_r),s_R(x_r),s(x_r)
		\end{align*}
		Note, an inner node of the configuration tree is itself a left or right child, that implies there is such a navigation gadget and it gets the $s$ predicate, if it has a left and a right child, marked with $s_L$ and $s_R$. This implies, this node cannot own a second navigation gadget that claims the opposite of another navigation gadget since the arity of the symbol $s$ enforces the limit of exactly three children. Remember, we have to ensure that the configuration tree is complete and has height $n$. This will be done if both children of the $r$ labeled root of the configuration tree are marked by height $n-1$ and by $s$ since $s$ is only true for them if every $s$ child itself has two $s$ children downto the leaves of the configuration tree. So, up to $n-1$, we count the height of the configuration tree by adding the following rules to $\PP_1$.
		\begin{align*}
			s_{h=0}(x) & \leftarrow s_{\Leaf}(x)\\
			s_{h=1}(x) & \leftarrow \Child(x,x_l), s_L(x_l), s_{h=0}(x_l) , \Child(x,x_r),s_R(x_r), s_{h=0}(x_r) \\
			\vdots \quad &\\
			s_{h=n-1}(x) & \leftarrow \Child(x,x_l), s_L(x_l), s_{h=n-2}(x_l) , \Child(x,x_r),s_R(x_r), s_{h=n-2}(x_r) 
		\end{align*}
To this end, we mark a node labeled by $r$ with the predicate $r_{nav}$ if it is the root of a navigable and  complete configuration tree and add the rule
\begin{align*} r_{nav}(x) &\leftarrow \Label_r(x), \Child(x,x_l), s_L(x_l),s(x_l), s_{h=n-1}(x_l), \qquad \\& \pushright{\Child(x,x_r), s_R(x_r),s(x_r), s_{h=n-1}(x_r)}
 \end{align*}
to $\PP_1$. Observe, during the computation of $\PP_1(T)$ a node labeled by $r$ gets marked with $r_{nav}$ if it is a root of a complete configuration tree of height $n$ where every skeleton node carries a correct navigation gadget and in the skeleton leaves a cell configuration 
is  correctly encoded. So, the conditions (\ref{cond:vGadget}) -- (\ref{cond:nGadget}) are fulfilled. 

The next goal is to ensure condition (\ref{cond:H}) that stands for the horizontal constraints (H1)--(H3). This actually holds if the tuple  $(i,j)$  of two neighboring configurations cells is contained in the relation $H(\Aut{A}_w)$. Remember, a node labeled by $m$ is already marked by $m_{k=i}$ for its encoded configuration $i$. In a first step and for every $i\in \set{1,\ldots,k}$, we propagate this information to the skeleton leaves by the following rules.
\[
	(k=i)_{\Leaf}(x)  \leftarrow s_{\Leaf}(x), \Child(x,y), m_{k=i}(y)
\]
Next, we propagate for a subtree of the configuration tree its leftmost and its rightmost configuration cell. Furthermore, it is to verify if the rightmost cell of the left child fits together with the leftmost cell of the right child. Therefore, we use the new predicates  $(k=i)_{\Left}$ and $(k=i)_{\Right}$ for every $i \in \set{1,\ldots, k}$ in the following rules 
\begin{align*}
	(k=i)_{\Left}(x) &\leftarrow  s(x),(k=i)_{\Leaf}(x) \\
	(k=i)_{\Right}(x) &\leftarrow  s(x),(k=i)_{\Leaf}(x) \\	
	(k=i)_{\Left}(x) &\leftarrow  s(x),\Child(x,x_l), s_L(x_l),(k=i)_{\Left}(x_l) \\
	(k=i)_{\Right}(x) &\leftarrow  s(x),\Child(x,x_r), s_R(x_r),(k=i)_{\Right}(x_r) ,
\end{align*} as well as for every $(i,j)\in H(\Aut{A}_w)$,  the predicate $H$ (if the nodes children fit together) in the following rules 
\begin{align*}
	H(x) & \leftarrow s(x), \Child(x,x_l), s_L(x_l),s_{\Leaf}(x_l),(k=i)_{\Right}(x_l),\\ & \pushright{\Child(x,x_r), s_R(x_r),s_{\Leaf}(x_r),(k=j)_{\Left}(x_r)}\\
	H(x) & \leftarrow s(x), \Child(x,x_l), s_L(x_l),H(x_l),(k=i)_{\Right}(x_l),\\ &
	\pushright{\Child(x,x_r), s_R(x_r),H(x_r),(k=j)_{\Left}(x_r)}\\
	H(x) & \leftarrow \Label_r(x), \Child(x,x_l), s_L(x_l),H(x_l),(k=i)_{\Right}(x_l), \qquad\\ &
	\pushright{\Child(x,x_r), s_R(x_r),H(x_r),(k=j)_{\Left}(x_r).}
\end{align*} Now, a node labeled by $r$ is marked with $H$ if its configuration tree satisfies all horizontal constraints from $H(\Aut{A}_w)$. By the following rules, we ensure that in a configuration tree do not exist two different CCells and use the $\idb$ predicate $\theta_i$ if the CCell $i$ exists in a subtree and $\Non_{\theta}$ if a cell does not belong to $\CCells$. For all $i \in \set{1,\ldots,k}$ where $i \in \CCells$, we add the rule \[\theta_i(x) \leftarrow (k=i)_{\Leaf}(x)\] and for all $j \in \set{1,\ldots,k}$ where $j \notin \CCells$, we add the rules \[\Non_{\theta} (x) \leftarrow (k=j)_{\Leaf}(x)\] to $\PP_1$. This will be propagated by 
\begin{align*}
	\theta_i(x) & \leftarrow  \Child(x,x_l),s_L(x_l), \theta_i(x_l), \Child(x,x_r),s_R(x_r), \Non_{\theta}(x_r) \\
		\theta_i(x) & \leftarrow  \Child(x,x_l),s_L(x_l), \Non_{\theta}(x_l), \Child(x,x_r),s_R(x_r), \theta_i(x_r) \\
\end{align*}
for every $i \in \set{1,\ldots, k}$ where $i\in \CCells$ and finally, a node labeled by $r$ carries 
the  $\idb$ predicate $\theta_i$ for exactly one $i \in \set{1,\ldots,k}$ if its configuration contains exactly one CCell, that is the configuration cell $i$. Otherwise, the node is not marked by any $\theta_i$ predicate. Implied by the following rules 
\[ r(x)  \leftarrow \Label_r(x), H(x), r_{nav}(x), \theta_i(x)  \qquad \text{ for all }i \in \CCells \] every root node of a configuration tree is marked with $r$ if its configuration tree satisfies the conditions (\ref{cond:vGadget})--(\ref{cond:H}).

Purposing the bottom-up analysis of the input tree, we have to verify that a configuration node labeled by $\LeafCTA$, $\LeafCTEleft$, or $\LeafCTEright$ represents a halting configuration that is given as CCell in the leftmost cell of its configuration tree. So, for all $i \in \CCells$ representing a configuration cell with current state $q_a$ that is the only accepting state of $\Aut{A}_w$, we add the rules
\begin{align*}
 \LeafCTA(x) & \leftarrow \Label_{\LeafCTA}(x) , \Child(x,x_r),r(x_r), \theta_i(x_r),(k=i)_{\Left}(x_r) \\
  \LeafCTEleft(x) & \leftarrow \Label_{\LeafCTEleft}(x) , \Child(x,x_r),r(x_r), \theta_i(x_r),(k=i)_{\Left}(x_r) \\
   \LeafCTEright(x) & \leftarrow \Label_{\LeafCTEright}(x) , \Child(x,x_r),r(x_r), \theta_i(x_r),(k=i)_{\Left}(x_r) \\
 \end{align*}
  Recall, the rank of the symbols representing a halting configuration is $\ar(\LeafCTA)=\ar(\LeafCTEright)=\ar(\LeafCTEleft)=1$ and so, for every subtree rooted by a node marked with the latter introduced $\idb$ predicates, we ensured conditions (\ref{cond:CT-r})--(\ref{cond:H}) and (\ref{cond:CTleafAcc}). 
	
	It remains to analyze the subtrees of the $\CT$ labeled nodes. Recall, an inner node of the $\CT$ tree will be positively marked 
	if \begin{enumerate}[(a)]
	\item it is labeled as universal configuration and it has two existential configuration children (one or both can be a leaf configuration node) carrying different CCells, or
	\item it is labeled as existential configuration and it has exactly one universal configuration child (or one leaf configuration node).
	\end{enumerate}
Additionally, it has an $r$ rooted configuration tree as child and the CCell on the $r$ node denotes a state of the machine that is existential if the configuration node is labeled as existential or that is universal if the configuration node is labeled as one.\footnote{Recall, the rank of $\CTEleft$, $\CTEright$, and $\CTA$ is $\ar(\CTEleft)=2$, $\ar(\CTEright)=2$, and $\ar(\CTA)=3$.} So,  we introduce predicates $\state_{\exists}$ and $\state_{\forall}$, as well as we extend the handling of the $\idb$-predicates $\LeafCTA$, $\LeafCTEright$, and $\LeafCTEleft$ by the following rules
\[	\state_{\exists}(x)  \leftarrow r(x), \theta_i(x) \] for all $i \in \CCells$ where $i$ is a configurations cell of an existential state, and \[
	\state_{\forall }(x)  \leftarrow r(x), \theta_j(x) \] for all $j \in \CCells$ where $j$ is a configurations cell of a universal state, and finally, we add \begin{align*}
	  \CTEleft(x) & \leftarrow   \LeafCTEleft(x) \\
  \CTEleft(x) & \leftarrow \state_{\exists}(x),\Label_{\CTEleft}(x) , \Child(x,x_r),r(x_r), \Child(x,x_a), \CTA(x_a),   \\
  	    \CTEright(x) & \leftarrow   \LeafCTEright(x) \\
  \CTEright(x) & \leftarrow  \state_{\exists}(x), \Label_{\CTEright}(x) , \Child(x,x_r),r(x_r), \Child(x,x_a), \CTA(x_a)  \\
  	    	\CTA(x) & \leftarrow 	\LeafCTA(x) \\
	\CTA(x) & \leftarrow \state_{\forall}(x),\Label_{\CTA}(x) , \Child(x,x_r),r(x_r), \\
	&  \pushright{\Child(x,x_1), \CTEleft(x_1),\Child(x_1,x_{1_r}),r(x_{1_r}), \theta_i(x_{1_r}),\qquad \qquad} \\
	& \pushright{\Child(x,x_2), \CTEright(x_2), \Child(x_2,x_{2_r}),r(x_{2_r}), \theta_j(x_{2_r})}
\end{align*} for all $i\neq j \in \{1,\ldots k\}$. Observe, that a node $v$ is marked with $\CTEleft$, $\CTEright$, or $\CTA$ if its subtree rooted by $v$ satisfies the conditions  (\ref{cond:CT-r}) -- (\ref{Cond:diffCCells}) and (\ref{cond:CTleafAcc}). 

Now, to ensure condition (\ref{Cond:startCT}) we fix $i \in \CCells$ that represents the configuration $(\Blank, ((q_0,\Blank),(q_0,\Blank,S)))$ and add the following rules \begin{align*}
 \startCT(x) & \leftarrow \CTA(x), \Child(x,x_r), r(x_r),\theta_i(x_r) \\
  \startCT(x) & \leftarrow \CTEleft(x), \Child(x,x_r), r(x_r),\theta_i(x_r) 
 \end{align*} to $\PP_1$. It is not forbidden that more than one node of the computation tree carries the marker as start configuration node, but the topmost configuration node has to be marked. And therefore, we add the rule \[ structure(x_{\top}) \leftarrow \Label_{\top}(x_{\top}), \Child(x_{\top},x_{\CT}),\startCT(x_{\CT}) \] and obtain a program $\PP_1$ such that a query $Q'=(\PP_1,\structure)$ yields \Yes\ on an input tree $T$ if and only if $T$ satisfies  conditions (\ref{cond:kTOP})--(\ref{cond:CTleafAcc}), that is, if and only if it is structured as an encoded computation tree of $\Aut{A}_w$.

To complete the demanded query $Q=(\PP,\Ans)$, it remains to extend the program $\PP_1$ in a way that $Q$ accepts the tree if the $\structure$ predicate is true for its root and the encoded configurations does not violate the transition relation. For the beginning, let $\PP$ consists of all rules of $\PP_1$. To shorten the query program, we mark all configuration nodes with the predicate $\CT$ by adding the following rules. \begin{align*}
	\CT(x) & \leftarrow \CTA(x)  & 	\LeafCT(x) & \leftarrow \LeafCTA(x)   \\
	\CT(x) & \leftarrow \CTEleft(x) & 	\LeafCT(x) & \leftarrow \LeafCTEleft(x) \\
	\CT(x) & \leftarrow \CTEright(x) & 	\LeafCT(x) & \leftarrow \LeafCTEright(x)  \\
\end{align*}
Since the upcoming rules are very large, we introduce short hands as binary predicates.\footnote{This does not mean that our datalog program is no longer a monadic program, in fact, we use these predicates for replacements in the rule to increase the readability of the whole rule. Variables occurring in the definition of the predicate, but not in the head, have to be renamed in a later context if it is necessary.} First, we define a predicate $\Succ(x_{r_1},x_{r_2})$ that is true for two nodes $x_{r_1}$ and $x_{r_2}$ if they are root nodes of successive encoded configuration trees. \[
\Succ(x_{r_1},x_{r_2}) \deff \left\{  \begin{array}{c} r(x_{r_1}), r(x_{r_2}), \CT(s_1),\CT(s_2),\\ \Child(s_1,s_2), \Child(s_1,x_{r_1}), \Child(s_2,x_{r_2}) \end{array}\right\} 
\]
The next predicate $\SameLevel_i(x_{s_1},x_{s_2})$ for an $i>0$ states for two nodes $x_{s_1}$ and $x_{s_2}$ that they are on the same level $i$ in the configuration tree of two successive encoded configuration trees.
 \[
\SameLevel_i(x_{s_1},x_{s_2}) \deff \left\{ \begin{array}{c}  s(x_{s_1}),s(x_{s_2}),\Succ(x_{r_1},x_{r_2}) ,\\ \Child^i(x_{r_1},x_{s_1}), \Child^i(x_{r_2},x_{s_2}) \end{array} \right\} 
\]
The predicate $\SameLevel^{\textit{LR}}_i(x_{s_1},x_{s_2})$ extends the predicate $\SameLevel_i(x_{s_1},x_{s_2})$ by the following property: The nodes $x_{s_1}$ and $x_{s_2}$ have to be both the left or both the right child of their parent.
 \[
\SameLevel^{\textit{LR}}_i(x_{s_1},x_{s_2}) \deff \left\{ \begin{array}{c} \SameLevel_i(x_{s_1},x_{s_2}),\\ \Child(x_{s_1},x_{p_1}), p(x_{p_1}),\Child(x_{s_2},x_{p_2}),p(x_{p_2}),\\ \Desc(x_{p_1},x_{t_1}),1(x_{t_1}), \Desc(x_{p_2},x_{t_2}) , 1(x_{t_2}) \\ \Child^{i+4}(z,x_{t_1}),\Child^{i+5}(z,x_{t_2}) \end{array} \right\}
\]
Observe, that the node $z$ is the configuration node of the predecessor configuration or its parent node and so, for the initial configuration at the top of the encoded computation tree, the extra buffering node above is necessary. Furthermore, this is the only point during the reduction where the $\Desc$ predicate is actually indispensable; we use it to guess whether the nodes are left or right children. In particular, if the nodes $x_{t_1}$ and $x_{t_2}$ do not indicate the same left- or right-orientation then the distance to $z$ is not $i+4$ for the predecessor and $i+5$ for the successor and a valuation of the rule will not be possible. Even another labeling of the encoding tree that tells us directly whether a child is the left or the right one seems to be impossible because it implies a rule for every path through the configuration tree; that leads to $2^n$ rules and this would avoid a reduction in time polynomial in $n$ and the size of the automaton.

Now, we are able to introduce a predicate  $\SameCell(x_{s_1},x_{s_2})$ that states for two skeleton nodes $x_{s_1}$ and $x_{s_2}$ reflecting the same cell of successive encoded configuration cell sequences; those cells are at depth $n$ of any configuration tree. 
\begin{align*}
 \SameCell(x_{s_1}&,y_{s_2}) \deff \\
  &\bigcup\limits_{1 \leq i \leq n-1} \left\{   \Child(x_i,x_{i+1}),\Child(y_i,y_{i+1}), \SameLevel^{\textit{LR}}_i(x_i,y_i)  \right\} \qquad\\
 & \pushright{\cup \set{\Child(x_{n-1},x_{s_1}),\Child(y_{n-1},y_{s_2}), \SameLevel^{\textit{LR}}_n(x_{s_1},y_{s_2}) } }
 \end{align*}
Next, we use the $\idb$ predicate $\delta$ to denote that a  configuration cell meshes with its  predecessor configuration cell in respect to the transition relation. So, for every tuple
 $(i,j) \in V(\Aut{A}_w)$ we add the following rule \begin{align*}
	\delta(x_{s_2}) &\leftarrow \SameCell(x_{s_1},x_{s_2}), \Child(x_{s_1},x_{m_1}), m(x_{m_1}), m_{k=i}(x_{m_1}),\qquad\\
	&  \pushright{\Child(x_{s_2},x_{m_2}), m(x_{m_2}), m_{k=j}(x_{m_2})}
\end{align*}
to $\PP$. To verify the correctness of this rule, recall that the $m$-labeled node $v$ of an 'me' cell configuration gadget is already marked with $m_{k=i}(v)$ if its gadget encodes the configuration cell $i$. Now, we have to verify that every configuration cell of the encoded sequence  respects the transition relation regarding its predecessor configuration cell and propagate this information to the configuration node by the following rules.
\begin{align*}
	\delta(x)  & \leftarrow \Child(x,x_l), s_L(x_l), \delta(x_l),\Child(x,x_r), s_R(x_r), \delta(x_r)\\
	\delta(x) & \leftarrow \CT(x) ,\Child(x,x_r) , r(x_r), \delta(x_r)
\end{align*}
The next step is to collect the information that every configuration node is a valid successor up to the top of the tree and we obtain that a configuration node $v$ is marked with $\Delta$ if the subtree rootet at $v$ is a suffix of a valid computation tree.
	\begin{align*}
		\Delta(x) & \leftarrow \LeafCT(x) \\
		\Delta(x) & \leftarrow \CTEleft(x) , \Child(x,x_a), \CTA(x_a), \Delta(x_a), \delta(x_a)\\
		\Delta(x) & \leftarrow \CTEright(x) , \Child(x,x_a), \CTA(x_a), \Delta(x_a), \delta(x_a)\\
		\Delta(x) & \leftarrow \CTA(x) , \Child(x,x_1), \CTEleft(x_1), \Delta(x_1), \delta(x_1), \qquad  \quad\\
		& \pushright{\Child(x,x_2), \CTEright(x_2), \Delta(x_2), \delta(x_2)}
	\end{align*}
Clearly, if the topmost configuration tree is an initial configuration and marked with $\Delta$ then we know that the input tree represents a valid accepting computation of $\Aut{A}_w$.  To this end, we conclude the construction by adding the rule \[ Ans(x) \leftarrow structure(x), \Child(x,x_{CT}), \Delta(x_{CT})\] and obtain the demanded query $Q=(\PP,\Ans)$ within polynomial time; that finishes the proof of Theorem~\ref{Thm:Desc_Empti_ranked_ordered}.\end{proofof}

  \section{Hardness on Unranked  Trees}
  
\noindent{\hypersetup{pdfborderstyle={/S/U/W 0}}\textbf{Theorem \ref{Thm:Desc_QCP_unranked_unordered}  \textit{(restated)}} \emph{ The QCP for Boolean $\mDatalog(\tauu^{\Desc})$ on finite labeled unordered trees is $\TwoEXPTIME$-hard.} 
\\ }

\begin{proof}
We prove the theorem by using and extending the proof of Theorem~\ref{Thm:Desc_Empti_ranked_ordered}, so we establish a reduction from the word acceptance problem of exponential space bounded alternating Turing machines to the QCP for  $\mDatalog(\tauu^{\Desc})$  on unranked labeled unordered trees. More precisely, we give a polynomial time reduction to the complement of the named QCP. For a given ATM $\Aut{A}_w$ that is normalized and composed of the original ATM $\Aut{A}$ and its input word $w$, we construct within polynomial time a finite unranked alphabet $\Sigma_{\textit{ur}}$ and two Boolean $\mDatalog(\tau_{u,\Sigma_{ur}}^{\Desc})$-queries $Q_1$ and $Q_2$, such that

\begin{align*}
	w \in L(A) \qquad \iff & \qquad \text{there is an accepting computation tree for ${\Aut{A}_w}$ } \\
						\iff & \qquad \text{there exists an unordered $\Sigma_{\textit{ur}}$-labeled tree $T$ such that }\qquad \\
						& \qquad \qquad   Q_1(T)=\Yes \text{ and } Q_2(T)=\No \\
						\iff & \qquad  Q_1 \not\subseteq Q_2.
\end{align*}
Recall the reduction from Theorem~\ref{Thm:Desc_Empti_ranked_ordered}, the utilized ranked alphabet $\Sigma'$, and the obtained  program $\PP$ in  $\mDatalog(\tauu^{\Desc})$ on ranked trees. We choose the unranked alphabet $\Sigma_{\textit{ur}}$ as the unranked version of $\Sigma'$, to be precise we set $\Sigma_{\textit{ur}}\ \deff \ \set{\alpha | \,\alpha \in \Sigma'}$. Furthermore, we set $Q_1 \deff (\PP,\Ans)$, that is, the query constructed during the former reduction. So, $Q_1$ stands for the ''necessary properties'' of the encoded computation tree. Since the alphabet is no longer ranked, we cannot avoid that a node has more than the planned children, but we can forbid that the redundant children have other labels and falsify the computation. Therefore, all that remains is to construct a query $Q_2$ in $\mDatalog(\tauu^{\Desc})$ such that $Q_2$ describe ''forbidden properties''. A tree with such properties does not describe an encoded computation tree. To this end, we check for forbidden labels on child nodes, a child of an $s$-labeled node, for example, must not be labeled with $\CTA$, and we have to test that there are no two paths encoding inconsistent information. Thus, the query $Q_2=(\PP_2,\Reject)$ will yield to $\Yes$ on an input tree if at least one of the following facts are  true. 

\begin{enumerate}[(1)]
\item \label{cond:OnlyOneRoot} A non root node is labeled by $\top$.
\item \label{cond:RootChildOnlyCT} The root has a child that is not labeled by an $\CT$-label.
\item \label{cond:CteChildOnlyRCta} A non halting existential configuration node has a child labeled with a symbol not in $\{r, \CTA, \LeafCTA \}$.
\item \label{cond:CtaChildOnlyRCte}An non halting universal configuration node has a child labeled with a symbol not in $\{r, \CTEright, \CTEleft, \LeafCTEright, \LeafCTEleft \}$.
\item \label{cond:LeafCtChildOnlyR}A halting  configuration node has a child labeled with a  symbol that is not $r$.
\item \label{cond:RChildOnlyS}An $r$ labeled node has a child labeled with a  symbol that is not $s$.
\item \label{cond:SChildOnlySp}An $s$ labeled node has a child labeled with a symbol not in $\{p,s,s_{\Leaf} \}$.
\item \label{cond:SLeafChildpm}An $s_{\Leaf}$ labeled node has a child labeled with a symbol not in $\{p,m\}$.
\item \label{cond:pmChildOnly01}A $p$ or $m$ labeled node has a child labeled with a symbol not in $\{0,1\}$.
\item \label{cond:01ChildOnly01}A $0$ or $1$ labeled node has a child labeled with a symbol not in $\{0,1,\bot\}$.
\item \label{cond:LeafIsALeaf} A $\bot$ labeled node has a child.
\item \label{cond:pmChildDistBot}A $p$ (or an $m$) labeled node has a descendant that is labeled $\bot$ with distance not equal to three (not equal to $k+1$), or is not a prefix of a valid gadget.
\item \label{cond:height} There exists a path in a configuration tree from the $r$ labeled node to an $s_{\Leaf}$ of length not equal to $n$.
\item  \label{cond:Identic}If any node has two children fulfilling the same role, but  encoding different information.
\end{enumerate}

Obviously, the conditions (\ref{cond:OnlyOneRoot}) -- (\ref{cond:height}) reflect the underlying structure. Additionally, an illustration to condition (\ref{cond:pmChildDistBot}) is given with Figure~\ref{Fig:Extend2Unranked} (a).  Condition (\ref{cond:Identic}) reflects the consistence of the encoding and enforces the following; if there are two configurations as children of a node in the computation tree, both universal, both left -- or right -- existential, then they have to provide exactly the same information during the computation. This includes  the contained  configuration trees, navigation gadgets, and so on, 
which can have different copies or copies of prefixes. Intuitively, it is clear that it does not matter if a node has additional children, but they must not provide wrong information; since every rule uses a maximum distance of $3+n+k$, it suffices  to have a fixed look ahead inside the encoded configuration (cf. Figure~\ref{Fig:Extend2Unranked} (b)). Now, it is comprehensible that the query $Q_2$ fulfilling condition (\ref{cond:OnlyOneRoot})--(\ref{cond:Identic}) yields \No\ on a tree $T$ and $Q_1$ yields \Yes\ on the same tree if and only if $T$ is an encoded  accepting  computation of $\Aut{A}_w$.

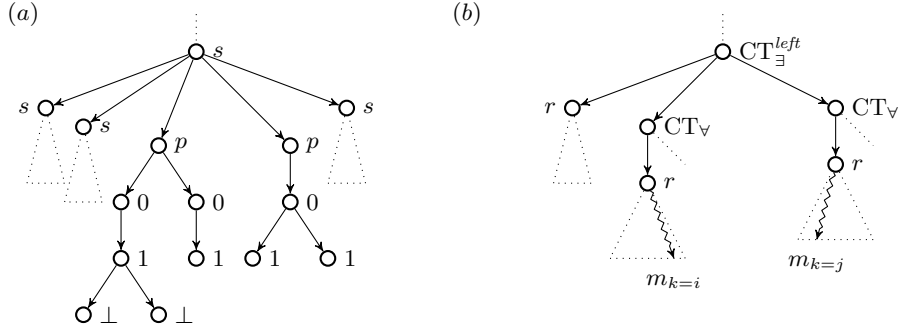
\begin{figure}[h]
\begin{center}
  \begin{tikzpicture}[->,>=stealth']
  	\tikzset{
  treenode/.style = {align=center, inner sep=0pt, text centered,
    font=\sffamily},  	
  arn_red/.style = {treenode, circle, white, font=\sffamily\bfseries, draw=gray,
    fill=white, text width=1em},
  arn_blue/.style = {treenode, circle, black, draw=black,  
    text width=1em, thick},
	}  
  
 \node at (-5.3,0) {$(a)$};  
  \node at (0.6,0) {$(b)$};

    \begin{scope}[xshift=-3cm,scale=0.5]
  	\tikzset{
  treenode/.style = {align=center, inner sep=0pt, text centered,
    font=\sffamily},  	
  arn_red/.style = {treenode, circle, white, font=\sffamily\bfseries, draw=gray,
    fill=white, text width=0.6em},
  arn_blue/.style = {treenode, circle, black, draw=black,  
    text width=0.6em, thick},
	}    
		 \node[arn_blue] at (0,-1) (v1) {};	
    	 \node[anchor=west] at (0.2,-1) (1abel) {$s$};	
    
         \node[arn_blue] at (4,-2.5) (v2) {};	
         \node[anchor=west] at (4.2,-2.5) (1abel) {$s$};		
          \node[arn_blue] at (-3,-3) (v2_1) {};	
         \node[anchor=west] at (-2.8,-3) (1abel) {$s$};		
         \node[arn_blue] at (-4,-2.5) (v2_r) {};	
         \node[anchor=east] at (-4.2,-2.5) (1abel) {$s$};

         \node[arn_blue] at (-1,-3.5) (v6) {};	         
         \node[anchor=west] at (-0.8,-3.5) (1abel) {$p$};		
         \node[arn_blue] at (-2,-5) (v6_0) {};	
         \node[anchor=west] at (-1.8,-5) (1abel) {$0$};
         \node[arn_blue] at (0,-5) (v26_0) {};	
         \node[anchor=west] at (0.2,-5) (1abel) {$0$};
          \node[arn_blue] at (0,-6.5) (v26_1) {};	
         \node[anchor=west] at (0.2,-6.5) (1abel) {$1$};
         
         \node[arn_blue] at (-2,-6.5) (v6_1) {};	
         \node[anchor=west] at (-1.8,-6.5) (1abel) {$1$};
         \node[arn_blue] at (-3,-8) (v6_leaf) {};
         \node[anchor=west] at (-2.8,-8) (1abel) {$\bot$};	
             \node[arn_blue] at (-1,-8) (v26_leaf) {};
         \node[anchor=west] at (-0.8,-8) (1abel) {$\bot$};

         \node[arn_blue] at (2.5,-3.5) (v16) {};	         
         \node[anchor=west] at (2.7,-3.5) (1abel) {$p$};		
         \node[arn_blue] at (2.5,-5) (v16_0) {};	
         \node[anchor=west] at (2.7,-5) (1abel) {$0$};
         \node[arn_blue] at (1.5,-6.5) (v16_1) {};	
         \node[anchor=west] at (1.7,-6.5) (1abel) {$1$};
         \node[arn_blue] at (3.5,-6.5) (v16_leaf) {};	
         \node[anchor=west] at (3.7,-6.5) (1abel) {$1$};

        \path[draw, dotted,-] 	(0,0) edge (v1);

	\path
    	(v1) edge (v2) 
    	    	(v1) edge (v2_r) 
        	    	(v1) edge (v2_1) 
    	    	(v1) edge (v6) 
    	    	(v6) edge (v6_0)
    	    	(v6) edge (v26_0)
    	    	(v26_0) edge (v26_1)
    	    	(v6_0) edge (v6_1)
    	    	(v6_1) edge (v26_leaf)
    	    	    	    	(v6_1) edge (v6_leaf)
    	    	      	    	(v1) edge (v16) 
    	    	(v16) edge (v16_0)
    	    	(v16_0) edge (v16_1)
    	    	    	    	(v16_0) edge (v16_leaf)  ;   	    	
        \path
        	(v2) edge[dotted,-] (3.5,-4.5) 
        	(3.5,-4.5) edge[dotted,-] (4.5,-4.5) 
        	(4.5,-4.5) edge[dotted,-] (v2);
          \path
        	(v2_1) edge[dotted,-] (-2.5,-5) 
        	(-2.5,-5) edge[dotted,-] (-3.5,-5) 
        	(-3.5,-5) edge[dotted,-] (v2_1);  

          \path
        	(v2_r) edge[dotted,-] (-3.5,-4.5) 
        	(-3.5,-4.5) edge[dotted,-] (-4.5,-4.5) 
        	(-4.5,-4.5) edge[dotted,-] (v2_r);

    	\end{scope}      
    	
    \begin{scope}[xshift=4cm,scale=0.5]
  	\tikzset{
  treenode/.style = {align=center, inner sep=0pt, text centered,
    font=\sffamily},  	
  arn_red/.style = {treenode, circle, white, font=\sffamily\bfseries, draw=gray,
    fill=white, text width=0.6em},
  arn_blue/.style = {treenode, circle, black, draw=black,  
    text width=0.6em, thick},
	}    
		 \node[arn_blue] at (0,-1) (v1) {};	
    	 \node[anchor=west] at (0.2,-1) (1abel) {$\CTEleft$};	
    
         \node[arn_blue] at (3,-2.5) (v2) {};	
         \node[anchor=west] at (3.2,-2.5) (1abel) {$\CTA$};		
          \node[arn_blue] at (-2,-3) (v2_l) {};	
         \node[anchor=west] at (-1.8,-3) (1abel) {$\CTA$};		
         \node[arn_blue] at (-4,-2.5) (v2_r) {};	
         \node[anchor=east] at (-4.2,-2.5) (1abel) {$r$};		

         \node[arn_blue] at (3,-4) (r) {};	
         \node[anchor=west] at (3.2,-4) (1abel) {$r$};		
          \node[arn_blue] at (-2,-4.5) (r_l) {};	
         \node[anchor=west] at (-1.8,-4.5) (1abel) {$r$};		         
   		\node[anchor=north] at  (-1.3,-6.7) (1abel) {$m_{k=i}$};		         
   		\node[anchor=north] at  (2.5,-6.2) (1abel) {$m_{k=j}$};		         
         
        \path[draw, dotted,-] 	(0,0) edge (v1);

	\path
    	(v1) edge (v2) 
    	    	(v1) edge (v2_r) 
     	    	(v1) edge (v2_l)
        	    (v2_l) edge (r_l) 
           	    (v2_l) edge[dotted,-] (-1,-4) 
     	    	(v2) edge (r)	
           	    (v2) edge[dotted,-] (4,-3.5);

          \path
        	(v2_r) edge[dotted,-] (-3.5,-4.5) 
        	(-3.5,-4.5) edge[dotted,-] (-4.5,-4.5) 
        	(-4.5,-4.5) edge[dotted,-] (v2_r);  

          \path
        	(r) edge[dotted,-] (4,-6) 
        	(4,-6) edge[dotted,-] (2,-6) 
        	(2,-6) edge[dotted,-] (r); 
        	
        	          \path
        	(r_l) edge[dotted,-] (-3,-6.5) 
        	(-3,-6.5) edge[dotted,-] (-1,-6.5) 
        	(-1,-6.5) edge[dotted,-] (r_l); 

			\path[->, decoration={zigzag,segment length=4,amplitude=.9,
  post=lineto,post length=4pt},font=\scriptsize,
  line join=round]
			(r_l) edge[decorate] (-1.3,-6.5)
			(r) edge[decorate] (2.5,-6);

    	\end{scope}

\end{tikzpicture} 
\end{center}
\caption[next]{(a) An example of allowed ''extentions'' of the encoded computation tree, considered at a navigation gadget that can exist multiple times where a copy also can be reduced to a prefix. (b) If the nodes marked by $m_{k=i}$ and $m_{k=j}$ have the same path through their configuration tree, that is, the same sequence of left and right children, then $i$ must be equal to $j$.   }\label{Fig:Extend2Unranked}
\end{figure}

For the beginning, let $\PP_2$ consist of all rules of $\PP$. We only consider trees $T$ with $Q_1(T)=\Yes$, otherwise we have in any way $Q_1 \subseteq Q_2$, which is enough for the reduction.   To propagate any detected	violation to the root node of the input tree, we propagate the $\Reject$ predicate from any node to the root by adding the following rule \[ \Reject(x) \leftarrow \Child(x,x_1),\Reject(x_1) \]to $\PP_2$. We reflect condition (\ref{cond:OnlyOneRoot}) by adding the rule 
\[ \Reject(x) \leftarrow \Child(x,x_1), \Label_{\top}(x_1). \] Since $Q_1(T)=\Yes$, we know the root is labeled with $\top$ and so, we mirror condition (\ref{cond:RootChildOnlyCT}) by the rule
\[ \Reject(x) \leftarrow \Label_{\top}(x), \Child(x,x_1), \Label_{\alpha}(x_1)\] for every $\alpha \in \Sigma_{\textit{ur}} \setminus \set{\CTA,\CTEleft,\CTEright}$.

To verify condition (\ref{cond:CteChildOnlyRCta}) we add the rules
\begin{align*}
 \Reject(x) &\leftarrow \Label_{\CTEleft}(x), \Child(x,x_1), \Label_{\alpha}(x_1)\\ 
  \Reject(x) &\leftarrow \Label_{\CTEright}(x), \Child(x,x_1), \Label_{\alpha}(x_1)
 \end{align*} for every $\alpha \in \Sigma_{\textit{ur}} \setminus \set{r, \CTA, \LeafCTA }$.

To verify condition (\ref{cond:CtaChildOnlyRCte}) we add the rule
\begin{align*}
 \Reject(x) &\leftarrow \Label_{\CTA}(x), \Child(x,x_1), \Label_{\alpha}(x_1)
 \end{align*}for every $\alpha \in \Sigma_{\textit{ur}} \setminus \set{r, \CTEright, \CTEleft, \LeafCTEright, \LeafCTEleft  }$.

To verify condition (\ref{cond:LeafCtChildOnlyR}) we add the rules
\begin{align*}
 \Reject(x) &\leftarrow \Label_{\LeafCTA}(x), \Child(x,x_1), \Label_{\alpha}(x_1) \\
  \Reject(x) &\leftarrow \Label_{\LeafCTEleft}(x), \Child(x,x_1), \Label_{\alpha}(x_1) \\
   \Reject(x) &\leftarrow \Label_{ \LeafCTEright}(x), \Child(x,x_1), \Label_{\alpha}(x_1) 
 \end{align*} for every $\alpha \in \Sigma_{\textit{ur}} \setminus \set{r}$.
 
To verify condition (\ref{cond:RChildOnlyS}) we add the rule
\begin{align*}
 \Reject(x) &\leftarrow \Label_{r}(x), \Child(x,x_1), \Label_{\alpha}(x_1) 
 \end{align*} for every $\alpha \in \Sigma_{\textit{ur}} \setminus \set{s}$.
 
 To verify condition (\ref{cond:SChildOnlySp}) we add the rule
\begin{align*}
 \Reject(x) &\leftarrow \Label_{s}(x), \Child(x,x_1), \Label_{\alpha}(x_1) 
 \end{align*} for every $\alpha \in \Sigma_{\textit{ur}} \setminus \set{p,s,s_{\Leaf} }$.
 
 To verify condition (\ref{cond:SLeafChildpm}) we add the rule
\begin{align*}
 \Reject(x) &\leftarrow \Label_{s_{\Leaf}}(x), \Child(x,x_1), \Label_{\alpha}(x_1) 
 \end{align*} for every $\alpha \in \Sigma_{\textit{ur}} \setminus \set{p,m }$.
 
 To verify condition (\ref{cond:pmChildOnly01}) we add the rules
\begin{align*}
 \Reject(x) &\leftarrow \Label_{p}(x), \Child(x,x_1), \Label_{\alpha}(x_1) \\
 \Reject(x) &\leftarrow \Label_{m}(x), \Child(x,x_1), \Label_{\alpha}(x_1) 
 \end{align*} for every $\alpha \in \Sigma_{\textit{ur}} \setminus \set{0,1 }$.
 
 To verify condition (\ref{cond:01ChildOnly01}) we add the rules
\begin{align*}
 \Reject(x) &\leftarrow \Label_{0}(x), \Child(x,x_1), \Label_{\alpha}(x_1) \\
 \Reject(x) &\leftarrow \Label_{1}(x), \Child(x,x_1), \Label_{\alpha}(x_1) 
 \end{align*} for every $\alpha \in \Sigma_{\textit{ur}} \setminus \set{0,1,\bot }$. 
 
  To verify condition (\ref{cond:LeafIsALeaf}) we add the rule
\begin{align*}
 \Reject(x) &\leftarrow \Label_{\bot}(x), \Child(x,x_1)
 \end{align*} to $\PP_2$. 

To verifiy condition (\ref{cond:pmChildDistBot}), we assume that conditions (\ref{cond:pmChildOnly01})  and (\ref{cond:01ChildOnly01})  are not fulfilled. This implies, the only possible labels at nodes descending a node labeled with $p$ or $m$ are $0$, $1$, and $\bot$. So, we add the following rules that ensures that no $\Leaf$ labeled path exists that is too short or too long, that is a path with a node labeled with $0$ or $1$ on position three for a navigation gadget and on position $k+1$ for a 'me' cell gadget. Thus, we add for the navigation gadget the following rules
\begin{align*}
	\Reject(x) &\leftarrow \Label_{p}(x), \Child(x,x_1), \Label_{\bot}(x_1)\\
	\Reject(x) &\leftarrow \Label_{p}(x), \Child^2(x,x_1), \Label_{\bot}(x_1)\\
	\Reject(x) &\leftarrow \Label_{p}(x), \Child^3(x,x_1), \Label_{0}(x_1)\\
		\Reject(x) &\leftarrow \Label_{p}(x), \Child^3(x,x_1), \Label_{1}(x_1)
\end{align*} and for the 'me' gadget, we add
\begin{align*}
	\Reject(x) &\leftarrow \Label_{m}(x), \Child(x,x_1), \Label_{\bot}(x_1)\\
	\Reject(x) &\leftarrow \Label_{m}(x), \Child^2(x,x_1), \Label_{\bot}(x_1)\\
		\vdots &\\
		\Reject(x) &\leftarrow \Label_{m}(x), \Child^k(x,x_1), \Label_{\bot}(x_1)\\
	\Reject(x) &\leftarrow \Label_{m}(x), \Child^{k+1}(x,x_1), \Label_{0}(x_1)\\
		\Reject(x) &\leftarrow \Label_{m}(x), \Child^{k+1}(x,x_1), \Label_{1}(x_1).
\end{align*} Recall, $\Child^i(x,y)$ is a short hand for the set of atoms denoting $y$ as a descendant of $x$ in the $i$-th generation.

	By the same way, we reflect condition (\ref{cond:height}) which says that  there exists a path in a configuration tree from the $r$ labeled node to an $s_{\Leaf}$ of length not equal to $n$. We know by conditions (\ref{cond:RChildOnlyS}), (\ref{cond:SChildOnlySp}), (\ref{cond:pmChildOnly01}) -- (\ref{cond:LeafIsALeaf})  that it suffices to test if there is a shorter path ending on an $s_{\Leaf}$ labeled node, or if there exists a path of length $n$ ending with an $s$ labeled node. Therefore, we add the rules
	\begin{align*}
		\Reject(x) &\leftarrow \Label_{r}(x), \Child(x,x_1), \Label_{s_{\Leaf}}(x_1)\\
	\Reject(x) &\leftarrow \Label_{r}(x), \Child^2(x,x_1), \Label_{s_{\Leaf}}(x_1)\\
		\vdots &\\
		\Reject(x) &\leftarrow \Label_{r}(x), \Child^{n-1}(x,x_1), \Label_{s_{\Leaf}}(x_1)\\
	\Reject(x) &\leftarrow \Label_{r}(x), \Child^{n}(x,x_1), \Label_{s}(x_1)
	\end{align*}
 to $\PP_2$.

Finally, we consider condition (\ref{cond:Identic}) and we start by verifying all neighboring navigation gadgets. By condition (\ref{cond:pmChildDistBot}) we already know every navigation gadget is a  valid navigation gadget or the prefix thereof . By the following rules, we detect if they are in conflict.
	\begin{align*}
		\Reject(x) &\leftarrow \Child(x_s,x_{p_1}),\Child(x_s,x_{p_2}), \Label_{p}(x_{p_1}),\Label_{p}(x_{p_2}),\qquad \quad\\
		&\pushright{ \Child(x_{p_1},x_1), \Label_{1}(x_1),\Child(x_{p_2},x_0), \Label_{0}(x_0)}\\
		\Reject(x) &\leftarrow \Child(x_s,x_{p_1}),\Child(x_s,x_{p_2}), \Label_{p}(x_{p_1}),\Label_{p}(x_{p_2}),\\
		&\pushright{ \Child^2(x_{p_1},x_1), \Label_{1}(x_1),\Child^2(x_{p_2},x_0), \Label_{0}(x_0)}
	\end{align*} The same holds for the 'me' cell configuration gadget and therefore, we add the rules
	\begin{align*}
		\Reject(x) &\leftarrow \Child(x_s,x_{m_1}),\Child(x_s,x_{m_2}), \Label_{m}(x_{m_1}),\Label_{m}(x_{m_2}),\qquad \quad\\
		&\pushright{ \Child(x_{m_1},x_1), \Label_{1}(x_1),\Child(x_{m_2},x_0), \Label_{0}(x_0)}\\
		\Reject(x) &\leftarrow \Child(x_s,x_{m_1}),\Child(x_s,x_{m_2}), \Label_{m}(x_{m_1}),\Label_{m}(x_{m_2}),\qquad \quad\\
		&\pushright{ \Child^2(x_{m_1},x_1), \Label_{1}(x_1),\Child^2(x_{m_2},x_0), \Label_{0}(x_0)}\\
		\vdots &\\
	\Reject(x) &\leftarrow \Child(x_s,x_{m_1}),\Child(x_s,x_{m_2}), \Label_{m}(x_{m_1}),\Label_{m}(x_{m_2}),\qquad \quad\\
		&\pushright{ \Child^k(x_{m_1},x_1), \Label_{1}(x_1),\Child^k(x_{m_2},x_0), \Label_{0}(x_0)}\\
	\end{align*}
Now, we are going to compare the configurations; that will be done analogously to the definition of the short hand predicate $\SameCell$ in the previous proof, but without the offset that was used to reach the successor configuration. So, we first define the predicates $\EquiLevel$, $\EquiLevel^{\textit{LR}}$, and $\EquiCell$, stating that two nodes are in the equivalent level, are both a left or both a right child, and, by the latter, denote equivalent cells.

 \[
\EquiLevel_i(x_{s_1},x_{s_2}) \deff \left\{ \begin{array}{c}  \Child^2(x,x_{r_1}),\Child^2(x,x_{r_2}),r(x_{r_1}),r(x_{r_2}) , \\  \Child^i(x_{r_1},x_{s_1}), \Child^i(x_{r_2},x_{s_2}),s(x_{s_1}),s(x_{s_2}) \end{array} \right\}
\]
The predicate $\EquiLevel^{\textit{LR}}_i(x_{s_1},x_{s_2})$ extends the predicate $\EquiLevel_i(x_{s_1},x_{s_2})$ by the following property: The nodes $x_{s_1}$ and $x_{s_2}$ have to be both the left or both the right child of their parent.
 \[
\EquiLevel^{\textit{LR}}_i(x_{s_1},x_{s_2}) \deff \left\{ \begin{array}{c} \EquiLevel_i(x_{s_1},x_{s_2}),\\ \Child(x_{s_1},x_{p_1}), p(x_{p_1}),\Child(x_{s_2},x_{p_2}),p(x_{p_2}),\\ \Desc(x_{p_1},x_{t_1}),1(x_{t_1}), \Desc(x_{p_2},x_{t_2}) , 1(x_{t_2}) \\ \Child^{i+4}(z,x_{t_1}),\Child^{i+4}(z,x_{t_2}) \end{array} \right\}
\]
And finally, we define $\EquiCell$ that is true for two nodes denoting  configuration cells that encode the same cell of the automaton. Note, that the predicate is reflexive.
\begin{align*}
 \EquiCell(x_{s_1},y_{s_2}) &\deff \bigcup\limits_{1 \leq i \leq n-1} \left\{   \Child(x_i,x_{i+1}),\Child(y_i,y_{i+1}), \EquiLevel^{\textit{LR}}_i(x_i,y_i)  \right\} \qquad\\
 & \pushright{\cup \set{\Child(x_{n-1},x_{s_1}),\Child(y_{n-1},y_{s_2}), \EquiLevel^{\textit{LR}}_n(x_{s_1},y_{s_2}) } }
 \end{align*}
To verify the value $k$, we utilize the predicate $m_{k=i}$ for  every $i \in \{1,\ldots, k\}$ given by a positive evaluation of query $Q_1$, and compare them for every $i,j  \in \{1,\ldots, k\}$ with $i\neq j$ by the following rules
\begin{align*}
	\Reject(x) \leftarrow &\type(x_{\CT_1}), \type(x_{\CT_2}),\Child^{n+1}(x_{\CT_1},x_{s_1}), \Child^{n+1}(x_{\CT_2},x_{s_2}), \\ &\EquiCell(x_{s_1},x_{s_2}), \Child(x_{s_1},x_{m_1}), m(x_{m_1}), m_{k=i}(x_{m_1}),\qquad\\
	&  \pushright{\Child(x_{s_2},x_{m_2}), m(x_{m_2}), m_{k=j}(x_{m_2})}
\end{align*} for every  $\type \in \set{\CTA,\LeafCTA,\CTEleft,\LeafCTEleft,\CTEright,\LeafCTEright}$.

Now, it is ensured that two configurations in the same role, provide different information, so the demanded query is defined by $Q_2=(\PP_2,\Reject)$.
	
Observe, by $Q_1$ we evaluate the computation tree by starting in the halting configurations, so it does not matter if a configuration has a successor configuration twice or if these successor configurations themselves have different successor configurations. In this case it suffices if one subtree leads to  accepting configurations on the leaves of an appropriate subtree. \end{proof}

\end{document}